\algrenewcommand\alglinenumber[1]{\scriptsize #1:}
\algrenewcommand\algorithmicindent{1em}%
\newcommand{\bea}{\begin{eqnarray}}
\newcommand{\eea}{\end{eqnarray}}
\newcommand{\bean}{\begin{eqnarray*}}
\newcommand{\eean}{\end{eqnarray*}}
\newcommand{\ceil}[1]{\left\lceil #1 \right\rceil}
\newcommand{\floor}[1]{\left\lfloor #1 \right\rfloor}
\newcommand{\sbinom}[2]{\left( \begin{array}{c} #1 \\ #2 \end{array} \right) }
\newcommand{\field}[1]{\mathbb{#1}}
\newcommand{\R}{\field{R}}
\newcommand{\cC}{{\cal C}}
\newcommand{\cD}{{\cal D}}
\newcommand{\cI}{{\cal I}}
\newcommand{\cL}{{\cal L}}
\newcommand{\cR}{{\cal R}}
\newcommand{\cS}{{\cal S}}
\newcommand{\cW}{{\cal W}}
\newcommand{\cX}{{\cal X}}
\newcommand{\cY}{{\cal Y}}
\newcommand{\sG}{\script{G}}
\newcommand{\sP}{\script{P}}
\newcommand{\bfg}{{\boldsymbol g}}
\newcommand{\bfx}{{\boldsymbol x}}
\newcommand{\bfy}{{\boldsymbol y}}
\newcommand{\bfz}{{\boldsymbol z}}
\DeclareMathOperator*{\argmax}{arg\,max}
\DeclareMathOperator*{\argmin}{arg\,min}
\DeclareMathAlphabet{\mathbfsl}{OT1}{cmr}{bx}{it}
\newcommand{\uuu}{\kern-1pt\mathbfsl{u}\kern-0.5pt}
\newcommand{\vvv}{\kern-1pt\mathbfsl{v}\kern-0.5pt}
\newcommand{\myboxplus}{\kern1pt\mbox{\small$\boxplus$}}
\makeatletter \DeclareRobustCommand{\sbinom}{\genfrac[]\z@{}}
\newcommand{\G}[2]{\sbinom{{#1}\kern-1pt}{{#2}\kern-1pt}}
\newcommand{\Gq}[2]{\sbinom{{#1}\kern-0.25pt}{{#2}\kern-0.25pt}}
\newcommand{\Ps}{\smash{{\sP\kern-2.0pt}_q\kern-0.5pt(n)}}
\newcommand{\sPs}{\smash{{\sP\kern-1.5pt}_q(n)}}
\newcommand{\Ptwo}{\smash{{\sP\kern-2.0pt}_2\kern-0.5pt(n)}}
\newcommand{\Ptwom}{\smash{{\sP\kern-2.0pt}_2\kern-0.5pt(m)}}
\newcommand{\Ptwonm}{\smash{{\sP\kern-2.0pt}_2\kern-0.5pt(n+m)}}
\newcommand{\Ptwoa}{\smash{{\sP\kern-2.0pt}_2\kern-0.5pt(1)}}
\newcommand{\Ptwob}{\smash{{\sP\kern-2.0pt}_2\kern-0.5pt(2)}}
\newcommand{\Ptwoc}{\smash{{\sP\kern-2.0pt}_2\kern-0.5pt(3)}}
\newcommand{\Ptwod}{\smash{{\sP\kern-2.0pt}_2\kern-0.5pt(4)}}
\newcommand{\Ptwoe}{\smash{{\sP\kern-2.0pt}_2\kern-0.5pt(5)}}
\newcommand{\Ptwof}{\smash{{\sP\kern-2.0pt}_2\kern-0.5pt(6)}}
\newcommand{\Ptwokm}{\smash{{\sP\kern-2.0pt}_2\kern-0.5pt(2k-1)}}
\newcommand{\Pone}{\smash{{\sP\kern-2.5pt}_2\kern-0.5pt(n{-}1)}}
\newcommand{\Gr}{\smash{{\sG\kern-1.5pt}_q\kern-0.5pt(n,k)}}
\newcommand{\Gi}{\smash{{\sG\kern-1.5pt}_q\kern-0.5pt(n,i)}}
\newcommand{\Gj}{\smash{{\sG\kern-1.5pt}_q\kern-0.5pt(n,j)}}
\newcommand{\Grmk}{\smash{{\sG\kern-1.5pt}_q\kern-0.5pt(n,n-k)}}
\newcommand{\Grdk}{\smash{{\sG\kern-1.5pt}_q\kern-0.5pt(2k,k)}}
\newcommand{\Grekappa}{\smash{{\sG\kern-1.5pt}_q\kern-0.5pt(n,e+1-\kappa)}}
\newcommand{\Grtwoekappa}{\smash{{\sG\kern-1.5pt}_q\kern-0.5pt(n,2e+1-\kappa)}}
\newcommand{\Gremkappa}{\smash{{\sG\kern-1.5pt}_q\kern-0.5pt(n,e-\kappa)}}
\newcommand{\Gn}{\smash{{\sG\kern-1.5pt}_2\kern-0.5pt(n,n{-}1)}}
\newcommand{\Gnq}{\smash{{\sG\kern-1.5pt}_q\kern-0.5pt(n,n{-}1)}}
\newcommand{\Gone}{\smash{{\sG\kern-1.5pt}_2\kern-0.5pt(n,1)}}
\newcommand{\Gqone}{\smash{{\sG\kern-1.5pt}_q\kern-0.5pt(n,1)}}
\newcommand{\GTwo}{\smash{{\sG\kern-1.5pt}_2\kern-0.5pt(n,k)}}
\newcommand{\GTwonk}[2]{{\smash{{\sG\kern-1.5pt}_2\kern-0.5pt({#1},{#2})}}}
\newcommand{\Gnk}{\smash{{\sG\kern-1.5pt}_2\kern-0.5pt(n,n{-}k)}}
\newcommand{\Greone}{\smash{{\sG\kern-1.5pt}_q\kern-0.5pt(n,e{+}1)}}
\newcommand{\Gretwo}{\smash{{\sG\kern-1.5pt}_q\kern-0.5pt(n,e{+}2)}}
\newcommand{\be}[1]{\begin{equation}\label{#1}}
\newcommand{\ee}{\end{equation}}
\newcommand{\Cref}[1]{Co\-rol\-la\-ry\,\ref{#1}}
\newtheorem{theorem}{Theorem}
\newtheorem{lemma}{Lemma}
\newtheorem{corollary}{Corollary}
\newtheorem{definition}{Definition}
\newtheorem{example}{Example}
\newtheorem{problem}{Problem}
\newcommand{\e}[1]{\textcolor{red}{#1}}
\begin{document}

 \author{%
   \IEEEauthorblockN{\textbf{Shubhransh~Singhvi}\IEEEauthorrefmark{1}, 
                     \textbf{Omer~Sabary}\IEEEauthorrefmark{3}, 
                     \textbf{Daniella~Bar-Lev}\IEEEauthorrefmark{3}
                     and \textbf{Eitan~Yaakobi}\IEEEauthorrefmark{3}}
                     
   \IEEEauthorblockA{\IEEEauthorrefmark{1}%
                      Signal Processing  \&  Communications Research  Center, International Institute  of  Information Technology, Hyderabad, India}

                        \IEEEauthorblockA{\IEEEauthorrefmark{3}%
                     Department of Computer Science, 
                     Technion---Israel Institute of Technology, 
                     Haifa 3200003, Israel}
 }

\title{\textbf{The Input and Output Entropies of the $k$-Deletion/Insertion Channel} }
\date{\today}
\maketitle
\thispagestyle{empty}	
\pagestyle{empty}

\begin{abstract}
The channel output entropy of a transmitted  word is the entropy of the possible channel outputs and similarly the input entropy of a received word is the entropy of all possible transmitted words. The goal of this work is to study these entropy values for the $k$-deletion, $k$-insertion channel, where exactly $k$ symbols are deleted, inserted in the transmitted word, respectively. If all possible words are transmitted with the same probability then studying the input and output entropies is equivalent. For both the 1-insertion and 1-deletion channels, it is proved that among all words with a fixed number of runs, the input entropy is minimized for words with a skewed distribution of their run lengths and it is maximized for words with a balanced distribution of their run lengths. Among our results, we establish a conjecture by Atashpendar et al. which claims that for the binary 1-deletion, the input entropy is maximized for the alternating words. This conjecture is also verified for the $2$-deletion channel, where it is proved that constant words with a single run minimize the input entropy.  
\end{abstract}

\section{Introduction}

In the last decade, channels that introduce insertion and deletion errors attracted significant attention due to their relevance to DNA storage systems~\cite{Anavy433524, GHP15, OAC17, PTH21, TWB20, YGM16}, where deletions and insertions are among the most dominant errors~\cite{HMG18, SOSAYY19}. The study of communication channels with insertion and deletion errors is also relevant to many other applications such as 
the synchronization of files and symbols of data streams~\cite{DolecekAnan_Sync_2007} and for cases of over-sampling and under-sampling at the receiver side~\cite{SSB16}. \emph{VT codes}, designed by Varshamov and Tenengolts~\cite{VT}, are the first family of codes that correct a single deletion or a single insertion~\cite{L66}. Later, several works extended the scheme to correct multiple deletion errors, as well as substitutions; see e.g.,~\cite{BGZ16, GS17, SWGY17, SB19, SRB21}.  Under some applications, such as DNA storage systems, the problem of list decoding was studied as well, for example in~\cite{maria1, guruswami2020optimally, hayashi2018list, hanna2019list, liu2019list,  wachter2017list}. Under this setup, the decoder receives a channel output and returns a (short) list of possible codewords that includes the transmitted codeword. Even though many works considered channels with deletion errors and designed codes that correct such errors, finding optimal codes for these channels and their capacity are yet far from being solved~\cite{CK15, KD10, ED06, RD15}. 

Additionally, noisy channels that introduce deletion and insertion errors were also studied as part of \emph{the trace reconstruction problem}~\cite{BKK04} and the \emph{sequence reconstruction problem}~\cite{L011, L012}. In these problems, a codeword $\bfx$ is transmitted over the same channel multiple times. This transmission results in several noisy copies of $\bfx$, and the goal is to find the required minimum number of these noisy copies that enables the reconstruction of $\bfx$ with high probability or in the worst case. Theoretical bounds and other results for the trace reconstruction problem over the deletion channel were proved in several works such as~\cite{AVDG19, BKK04, D21, gabrys2018sequence, HPP18}, and other works also studied algorithms for the sequence reconstruction problem for channels that introduce deletion and insertion errors; see e.g.,~\cite{SB21, CDR21, ISIT20, yaakobi2013sequence}.


This work studies several related problems to the capacity of the deletion and insertion channel and can contribute to their analysis. Assume a word $\bfx$ is transmitted over a communication channel and the channel output is $\bfy$. This work first considers the \emph{output entropy}, which is the entropy of all possible channel outputs $\bfy$ for a given input word $\bfx$. Similarly, we study the \emph{input entropy}, first studied in~\cite{ABC18}, which is the entropy of all possible transmitted words $\bfx$, given a channel output $\bfy$. Note that unlike the output entropy, the input entropy also depends on the probability to transmit each word $\bfx$.  Hence, it is assumed in this work that this probability is equal for each word $\bfx$. Our main goal is to characterize these entropies, their expected values, and the sequences that maximize and minimize them for a combinatorial version of the insertion and the deletion channels, which are referred as the \emph{$k$-deletion channel} and the \emph{$k$-insertion channel}. In the $k$-deletion, $k$-insertion channel the number of deletions, insertions is exactly $k$ and the errors are equally distributed~\cite{BGSY21, G15}. It was shown in~\cite{ABC18} that the sequences that minimize the input entropy for the $1$-deletion and the $2$-deletion channels, in the binary case, are the all-zeroes and all-ones words, under the equal transmittion probability assumption. 

Studying the input and output entropies for the insertion and deletion channels can assist with improving existing results in various applications. 
The output entropy is directly related to the channel capacity~\cite{CK15, D67,KD10, ED06, RD15}, and can be used in list-decoders to rank the different codewords from the decoder's list by their transmission probability~\cite{guruswami2020optimally,wachter2017list}. The output entropy can also be used to assist with code design by prioritizing words with lower output entropy since they are likely to have higher successful decoding probability. 
Alternatively, when the goal is to encrypt the information, words with higher output entropies are preferred~\cite{ARR15}. 
Among other applications, the input entropy is relevant to DNA reconstruction algorithms, since in several of these algorithms such as~\cite{BO21, SYSY20, TrellisBMA}, there is a limitation on the number of noisy copies that are considered by the algorithm's decoder, due to design restrictions and run-time considerations. Therefore, in case the number of received noisy copies is greater than this limitation, a subset of these copies should be considered. Hence, to improve the accuracy of such algorithms, the input entropy of the channel outputs can be used to wisely select this subset of copies. Studying the input entropy was also investigated in~\cite{ABC18,ARR15} for analyzing the information leakage of a key by revealing any of its subsequences.

The rest of the paper is organized as follows. In Section~\ref{sec:pre}, we give the basic notations and definitions in the paper, discuss the deletion channel capacity, and describe its relation to the input and output entropies of the $k$-deletion and $k$-insertion channels. In this section we also give a formal definition of the problems studied in this paper. Section~\ref{sec:charact} characterizes the input entropies of the $k$-deletion and $k$-insertion channels and Section~\ref{sec:extremum} finds their extremum values for $k=1, 2$. Lastly, Section~\ref{sec:ave} studies the average input entropies for $k=1$.

\section{Preliminaries and Problem Statement}\label{sec:pre}
Let $\cS$ be a discrete channel with input alphabet $\cX$ and output alphabet $\cY$. The channel is characterized by a conditional probability distribution 
$\mathsf{Pr}_\cS \{ \bfy \text{ received} | \bfx \text{ transmitted} \},$ for every $\bfx \in\cX^*$ and $\bfy\in\cY^*$, which is denoted for shorthand as $\mathsf{P}^{\mathsf{Out}}_\cS(\bfy | \bfx)$. 
The posterior probability of the channel $\cS$ is characterized by the conditional probability of the channel and the transmission probability of every message and is given by

$$ \mathsf{Pr}_\cS \{ \bfx \text{ transmitted} | \bfy \text{ received}\} =  \frac{ \mathsf{P}^{\mathsf{Out}}_\cS(\bfy | \bfx) \cdot \mathsf{P} \{ \bfx \text{ transmitted}  \} }{\mathsf{P} \{\bfy \text{ received}\} }.$$
 
\hspace{-0.5ex}We will refer to this probability in short as $\mathsf{P}^{\mathsf{In}}_\cS(\bfx | \bfy).$ When the channel will be clear from the context we may remove it from these notations. 
A channel $\cS$ is called a \emph{discrete memoryless channel} (\emph{DMC}) if it maps
$\cX^n$ to $\cY^n$ and

$$\mathsf{Pr}^{\mathsf{Out}}_\cS(y_1,\ldots,y_n|x_1,\ldots,x_n) = \prod_{i=1}^n \mathsf{Pr}^{\mathsf{Out}}_\cS(y_i|x_i).$$

It is well known that the channel capacity, i.e., the supremum of all achievable rates, of a DMC $\cS$ is given by

\begin{equation}\label{eq:cap}
\mathsf{Cap}(\cS) = \max_{\mathsf{P}(\bfx)} I(X;Y),  
\end{equation}
 
\hspace{-0.5ex}where $\mathsf{P}(\bfx)$ is the transmission probability for every $\bfx \in\cX^*$, $X,Y$ are random variables denoting the transmitted and the received symbols over the channel, respectively, and ${I(X;Y) = H(X) - H(X|Y) = H(Y) - H(Y|X)}$ is the mutual information between $X$ and $Y$. 

Synchronization channels, such as the insertion channel and the deletion channel, are not memoryless and thus the connection in (\ref{eq:cap}) does not necessarily hold for them. Hence, it is more common to study the capacity of the \emph{finite block length}, which is denoted by $\mathsf{Cap}_n(\cS)$ and is defined as

\begin{equation}\label{eq:cap_n}
\mathsf{Cap}_n(\cS) = \frac{1}{n}\max_{\mathsf{P}(\bfx)} I(X^{(n)};Y_{X^{(n)}}), 
\end{equation}

\hspace{-2.1ex}where the maximum is taken over all distributions $\mathsf{P}(\bfx)$ supported on $\cX^n$ and $Y_{X^{(n)}}$ is the random variable corresponding to the received message in this case. In general, it does not necessarily hold for synchronization channels that $\mathsf{Cap}(\cS) = \lim_{n\rightarrow\infty}\mathsf{Cap}_n(\cS)$, however, based upon Dobrushin's result~\cite{D67}, several works explored conditions for such an equality to hold. Note that

\begin{align*}
I(X^{(n)};Y_{X^{(n)}}) & = H(Y_{X^{(n)}}) - H(Y_{X^{(n)}}|X^{(n)}) & \\
&= H(Y_{X^{(n)}}) - \sum_{\bfx\in\cX^n}\mathsf{Pr}(\bfx)H(Y_{X^{(n)}}|X^{(n)} = \bfx), &
\end{align*}

\hspace{-2.5ex}or, alternatively,

\begin{align*} 
I(X^{(n)};Y_{X^{(n)}}) & = H(X^{(n)}) - H(X^{(n)} | Y_{X^{(n)}}) & \\
&= H(X^{(n)}) - \sum_{\bfy\in\cY^*}\mathsf{Pr}(\bfy)H(X^{(n)}|Y_{X^{(n)}} = \bfy).  &  
\end{align*}

\hspace{-2ex}Hence, one of the more important tasks in studying the capacity of synchronization channels is to determine for every $\bfx\in\cX^n$ the conditional entropy $H(Y_{X^{(n)}}|X^{(n)}=\bfx)$ and similarly for every $\bfy\in\cY^*$ the conditional entropy $H(X^{(n)}|Y_{X^{(n)}} = \bfy)$. Formally, for an input $\bfx\in \cX^n$, we refer to $H(Y_{X^{(n)}}|X^{(n)}=\bfx)$ as the \emph{output entropy} of the channel and for an output $\bfy\in \cY^*$, $H(X^{(n)}|Y_{X^{(n)}} = \bfy)$ is the \emph{input entropy} of the channel. Note that as opposed to several discrete symmetric memoryless channels, such as the binary symmetric channel and the binary erasure channel, the input and output entropies for synchronization channels, and in particular for the insertion and deletion channels, depend on the specific choice of $\bfx$ and $\bfy$. Furthermore, while the output entropy depends solely on the channel $\cS$, the input entropy depends both on $\cS$ and the channel input distribution $\mathsf{Pr}(X^{(n)})$. The case in which the channel input distribution is uniform is referred as \emph{uniform transmission}. 

This work studies the output and input entropies, while focusing primarily on the insertion channel and the deletion channel. Furthermore, to simplify the analysis of these problems, we consider the special case of the \emph{$k$-deletion channel}, \emph{$k$-insertion channel} which deletes, inserts exactly $k$ symbols of the transmitted word uniformly at random, respectively. 

For a positive integer $n$, let $[n]\triangleq\{1,\ldots,n\}$ and let $\Sigma_q \triangleq \{0,1,\ldots,q-1\}$ be an alphabet of size $q$. For an integer $n\ge0$, let $\Sigma_q^n$ be the set of all sequences (words) of length $n$ over the alphabet $\Sigma_q$. For an integer $k$, $0\le k\le n$, a sequence $\bfy\in\Sigma_q^{n-k}$ is a  \emph{$k$-subsequence} of $\bfx\in\Sigma_q^n$ if $\bfy$ can be obtained by deleting $k$ symbols from $\bfx$. Similarly, a sequence $\bfy\in\Sigma_q^{n+k}$ is a  \emph{$k$-supersequence} of $\bfx\in\Sigma_q^n$ if $\bfx$ is a $k$-subsequence of~$\bfy$. Let $\bfx$ and $\bfy$ be two sequences of length $n$ and $m$ respectively such that $m < n$. The \emph{embedding number} of $\bfy$ in $\bfx$, denoted by $\omega_{\bfy}(\bfx)$, is defined as the number of distinct occurrences of $\bfy$ as a subsequence of $\bfx$. More formally, the embedding number is the number of distinct index sets, $(i_1, i_2, \ldots, i_{m})$ , such that $1 \le i_1 < i_2 < \cdots < i_{m} \le n$ and $x_{i_1} = y_1, x_{i_2} = y_2, \dotsc, x_{i_{m}} = y_{m}$. For example, for $\bfx = \texttt{11220}$ and $\bfy = \texttt{120}$, it holds $\omega_{\bfy}(\bfx) = 4$. The \textit{$k$-insertion ball} centred at ${\bfx\in\Sigma_q^n}$, denoted by $I_t(\bfx)\subseteq \Sigma_q^{n+t}$, is the set of all $k$-supersequences of $\bfx$. Similarly, the {\textit{$k$-deletion ball}} centred at ${\bfx\in\Sigma_q^n}$, denoted by $D_t(\bfx)\subseteq \Sigma_q^{n-k}$, is the set of all $k$-subsequences of~$\bfx$.

In the \emph{$k$-deletion channel}, denoted by $k\textrm{-}\mathsf{Del}$, exactly $k$ symbols are deleted from the transmitted word. The $k$ symbols are selected uniformly at random out of the $\binom{n}{k}$ symbol positions, where $n$ is the length of the transmitted word. Hence, the conditional probability of the $k\textrm{-}\mathsf{Del}$ channel is,
$\mathsf{Pr}_{k\textrm{-}\mathsf{Del}}^{\mathsf{Out}}\{\bfy| \bfx\}=\frac{\omega_{\bfy}(\bfx)}{\binom{n}{k}},$
for all $\bfx\in\Sigma_q^n,\bfy\in\Sigma_q^{n-k}$. Similarly, for the \emph{$k$-insertion channel}, denoted by $k\textrm{-}\mathsf{Ins}$, exactly $k$ symbols are inserted to the transmitted word, while the locations and values of the $k$ symbols are selected uniformly at random out of the $\binom{n+k}{k}$ possible locations and $q^k$ possible options for the symbols. Therefore, the conditional probability of the $k$-insertion channel is,
${\mathsf{Pr}_{k\textrm{-}\mathsf{Ins}}^{\mathsf{Out}}\{\bfy| \bfx\}=\frac{\omega_{\bfx}(\bfy)}{\binom{n+k}{k}q^k},}$ %
for all $\bfx\in\Sigma_q^n,\bfy\in\Sigma_q^{n+k}$.

The goal of this work is to study the following values for the $k$-deletion and $k$-insertion channels.
\begin{problem}\label{prob1}
Find the following values for the channel $k\textrm{-}\mathsf{Del}$:
\begin{enumerate}
    \item For all $\bfx\in\Sigma_q^n$, find its output entropy over $k\textrm{-}\mathsf{Del}$, 
     
    $$\mathsf{H}^{\mathsf{Out}}_{k\textrm{-}\mathsf{Del}} (\bfx) \triangleq H(Y_{X^{(n)}}|X^{(n)}=\bfx).$$
     
    \item Find the minimum, maximum, and average output entropy of the channel $k\textrm{-}\mathsf{Del}$, 
    \begin{align*}
    \mathsf{min}^{\mathsf{Out}}_{k\textrm{-}\mathsf{Del}}(n) &\triangleq \min_{\bfx\in\Sigma_q^n}\left\{\mathsf{H}^{\mathsf{Out}}_{k\textrm{-}\mathsf{Del}} (\bfx)\right\}, \\
    \mathsf{max}^{\mathsf{Out}}_{k\textrm{-}\mathsf{Del}}(n) &\triangleq \max_{\bfx\in\Sigma_q^n}\left\{\mathsf{H}^{\mathsf{Out}}_{k\textrm{-}\mathsf{Del}} (\bfx)\right\}, \\
    \mathsf{avg}^{\mathsf{Out}}_{k\textrm{-}\mathsf{Del}}(n) &\triangleq \mathbb{E}_{\bfx\in\Sigma_q^n}\left\{\mathsf{H}^{\mathsf{Out}}_{k\textrm{-}\mathsf{Del}} (\bfx)\right\}.
    \end{align*}
     
    \item For all $\bfy\in\Sigma_q^{n-k}$, find its input entropy over the channel $k\textrm{-}\mathsf{Del}$ with input distribution $\mathsf{Pr}(X^{(n)})$, 
     
    $$\mathsf{H}^{\mathsf{In}}_{k\textrm{-}\mathsf{Del},\mathsf{Pr}(X^{(n)})} (\bfy) \triangleq H(X^{(n)}|Y_{X^{(n)}} = \bfy).$$
     
    \item Find the minimum, maximum, and average input entropy of the channel $k\textrm{-}\mathsf{Del}$ with input distribution $\mathsf{Pr}(X^{(n)})$,  
    \begin{align*}
    \mathsf{min}^{\mathsf{In}}_{k\textrm{-}\mathsf{Del}}\left(n,\mathsf{Pr}\left(X^{(n)}\right)\right) &\triangleq \min_{\bfy\in\Sigma_q^{n-k}}\left\{\mathsf{H}^{\mathsf{In}}_{k\textrm{-}\mathsf{Del},\mathsf{Pr}(X^{(n)})} (\bfy)\right\}, \\
    \mathsf{max}^{\mathsf{In}}_{k\textrm{-}\mathsf{Del}}\left(n,\mathsf{Pr}\left(X^{(n)}\right)\right) &\triangleq \max_{\bfy\in\Sigma_q^{n-k}}\left\{\mathsf{H}^{\mathsf{In}}_{k\textrm{-}\mathsf{Del},\mathsf{Pr}(X^{(n)})} (\bfy)\right\},\\
     \mathsf{avg}^{\mathsf{In}}_{k\textrm{-}\mathsf{Del}}\left(n,\mathsf{Pr}\left(X^{(n)}\right)\right) &\triangleq \mathbb{E}_{\bfy\in\Sigma_q^{n-k}}\left \{\mathsf{H}^{\mathsf{In}}_{k\textrm{-}\mathsf{Del},\mathsf{Pr}(X^{(n)})} (\bfy)\right \}. 
    \end{align*}
      
\end{enumerate}
The equivalent values and notations are defined similarly for the $k$-insertion channel, $k\textrm{-}\mathsf{Ins}$.
\end{problem}
Atashpendar et al.~\cite{ABC18} studied the minimum input entropy of the $k\textrm{-}\mathsf{Del}$ for the case where $k \in \{ 1, 2\}$. In their work, they presented an efficient algorithm to count the number of occurrences of a sequence $\bfy$ in any given supersequence $\bfx$. They also provided an algorithm that receives a sequence $\bfy$, computes the set of all of its supersequences of specific length, characterizes the distribution of their embedding numbers and clusters them by their Hamming weight. Lastly, they proved that the all-zero and the all-one words minimize the input entropy of the $1\textrm{-}\mathsf{Del}$ and $2\textrm{-}\mathsf{Del}$ channels, under uniform transmission.

In the following lemma we show the relation between the input entropy and the output entropy of the $k$-insertion and the $k$-deletion channels under uniform transmission. 
\begin{lemma} 
For uniform transmission over the $k$-insertion, {$k$-deletion} channel and for any channel output $\bfy$, it holds that, 

$$\mathsf{H}^{\mathsf{In}}_{k\textrm{-}\mathsf{Del},\mathsf{Pr}(X^{(n)})} (\bfy) = \mathsf{H}^{\mathsf{Out}}_{k\textrm{-}\mathsf{Ins}} (\bfy) \textrm{ and } \mathsf{H}^{\mathsf{In}}_{k\textrm{-}\mathsf{Ins},\mathsf{Pr}(X^{(n)})} (\bfy) = \mathsf{H}^{\mathsf{Out}}_{k\textrm{-}\mathsf{Del}} (\bfy). $$ 
 
\end{lemma}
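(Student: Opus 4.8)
The plan is to show that, under uniform transmission, each entropy on the left-hand side and the corresponding entropy on the right-hand side are computed from \emph{the same probability distribution}; since the entropy functional depends only on the distribution, the equalities follow with no further estimation. So the whole statement reduces to matching two distributions in each of the two cases.

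First I would fix a channel output $\bfy\in\Sigma_q^{n-k}$ and expand the posterior of the $k\textrm{-}\mathsf{Del}$ channel via Bayes' rule. Under uniform transmission $\mathsf{P}(\bfx)=q^{-n}$, so the common factors $\binom{n}{k}$ and $q^{-n}$ appear in both the numerator and the normalizing denominator and cancel, leaving
$$\mathsf{P}^{\mathsf{In}}_{k\textrm{-}\mathsf{Del}}(\bfx\mid\bfy)=\frac{\omega_{\bfy}(\bfx)}{\sum_{\bfx'\in I_k(\bfy)}\omega_{\bfy}(\bfx')},$$
which is supported on the insertion ball $I_k(\bfy)$. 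On the other hand, viewing $\bfy$ as the length-$(n-k)$ input of the $k\textrm{-}\mathsf{Ins}$ channel, its output distribution is $\mathsf{Pr}^{\mathsf{Out}}_{k\textrm{-}\mathsf{Ins}}(\bfx\mid\bfy)=\omega_{\bfy}(\bfx)/\big(\binom{n}{k}q^k\big)$, again supported on $I_k(\bfy)$. The numerators already agree, so everything hinges on one combinatorial identity.

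The key step is the identity $\sum_{\bfx'\in I_k(\bfy)}\omega_{\bfy}(\bfx')=\binom{n}{k}q^k$, which I would establish by double counting. The left-hand side counts pairs consisting of a length-$n$ supersequence $\bfx$ of $\bfy$ together with a distinguished embedding of $\bfy$ into $\bfx$; such a pair is the same data as a choice of the $k$ positions of $\bfx$ \emph{not} used by the embedding (there are $\binom{n}{k}$ such position sets) together with an arbitrary symbol from $\Sigma_q$ at each of those $k$ positions (contributing $q^k$), since the remaining $n-k$ positions are forced to spell $\bfy$ in order. This bijection gives the right-hand side, the two distributions then coincide exactly, and the first equality follows.

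For the second equality I would run the mirror-image argument, being careful that now $\bfy\in\Sigma_q^{n+k}$ and the roles of sub- and supersequence are swapped: the $k\textrm{-}\mathsf{Ins}$ posterior simplifies to $\omega_{\bfx}(\bfy)/\sum_{\bfx'\in D_k(\bfy)}\omega_{\bfx'}(\bfy)$ on $D_k(\bfy)$, the $k\textrm{-}\mathsf{Del}$ output distribution on input $\bfy$ is $\omega_{\bfx}(\bfy)/\binom{n+k}{k}$, and the matching identity $\sum_{\bfx'\in D_k(\bfy)}\omega_{\bfx'}(\bfy)=\binom{n+k}{k}$ holds because its left side counts (subsequence, embedding) pairs, i.e.\ choices of the $n$ retained positions---equivalently the $k$ deleted positions---among the $n+k$ positions of $\bfy$. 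I expect the two counting identities to be the only real content; the entropy equalities are then immediate. The single point requiring care is the length bookkeeping, namely keeping track of which word is the sub- and which the supersequence, so that the embedding-number subscripts and the binomial coefficients line up correctly in each direction.
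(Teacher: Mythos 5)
Your proof is correct and takes essentially the same route as the paper: the paper's one-line argument---that under uniform transmission every channel output of the given length is equally likely, so the posterior of one channel coincides with the output distribution of the reverse channel---is exactly the content of your double-counting identities $\sum_{\bfx\in I_k(\bfy)}\omega_{\bfy}(\bfx)=\binom{n}{k}q^k$ and $\sum_{\bfx\in D_k(\bfy)}\omega_{\bfx}(\bfy)=\binom{n+k}{k}$. You simply supply the details that the paper compresses into ``it can be shown,'' so your write-up is a fleshed-out version of the paper's own proof.
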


\begin{proof}
Since any word $\bfx$ of length $n$ can be transmitted with the same probability, it can be shown that any channel output $\bfy$ of length $n-k$, $n+k$ can be obtained with the same probability as an output of the channel $k\textrm{-}\mathsf{Del}$, $k\textrm{-}\mathsf{Ins}$, respectively. Thus, the results can be derived directly from the definitions.
\end{proof}
Based on the latter lemma, the rest of the paper studies the input entropy of the $k$-insertion and the $k$-deletion channels. Unless otherwise stated, we shall assume uniform transmission and thus, we drop the $\mathsf{Pr}(X^{(n)})$ term from the notation of the input entropy.

\section{Characterization of the Input Entropy}\label{sec:charact}
This section presents a complete characterization of the input entropy of the channels $k\textrm{-}\mathsf{Del}$, $k\textrm{-}\mathsf{Ins}$ and gives an explicit expression of these entropies for any channel output $\bfy$.


The following definitions will be used in the rest of the paper. 
For a sequence $\bfx$, a \emph{run} of $\bfx$ is a maximal subsequence of identical consecutive symbols within $\bfx$.
The number of runs in $\bfx$ is denoted by $\rho(\bfx)$.
We denote by $\Sigma_{q,R}^n$, the set of sequences $\bfx \in \Sigma_q^n$, such that $\rho(\bfx)=R$. It is well known that the number of such sequences is $|\Sigma_{q,R}^n| = {\binom{n-1}{R-1}} q (q-1)^{R-1}$.

For a sequence $\bfx \in \Sigma_{q,R}^n$, its \emph{run length profile}, denoted by $\cR\cL(\bfx)$, is the vector of the lengths of the runs in $\bfx$. That is, 
$
\cR\cL(\bfx) \triangleq  (r_1,r_2,\ldots, r_R),
$
where $r_{i}$, for $i\in[R]$ denotes the length of the $i$-th run of $\bfx$. 
For example, for $q=4$ and $\bfx = \texttt{311221110}$, we have that $\cR\cL(\bfx) = (1, 2, 2, 3, 1)$. 
It is said that $\bfx \in \Sigma_{q,R}^n$, is \textit{skewed} if it consists of $R-1$ runs of length one, and a single run of length $n-(R-1)$; $\bfx$ is \textit{balanced} if $\bfx$ consists of $r\equiv n \bmod R$ runs of length $\ceil{\frac{n}{R}}$ and the remaining $R - r$ runs are of length $\floor{\frac{n}{R}}$.
It is known that the size of the $k$-insertion ball of any word $\bfy\in \Sigma_q^{n-k}$ is $|I_k(\bfy)| = \sum_{0\le i \le k} \binom{n}{i}(q-1)^i$~\cite{L66}. 
\begin{lemma} \label{lm:entropyIn-k-del}
For any integers $n$ and $k$, such that $k \le n$, and for any word $\bfy \in \Sigma_q^{n-k}$, we have that,

$$
\mathsf{H}^{\mathsf{In}}_{k\textrm{-}\mathsf{Del}} (\bfy) 
 = 
\log \left( \binom{n}{k} q^k\right) -\frac{1}{\binom{n}{k}q^k}\sum_{\bfx \in I_k(\bfy) } \omega_{\bfy}(\bfx) \cdot \log \left( {\omega_{\bfy}(\bfx)}\right).
$$
 
\end{lemma}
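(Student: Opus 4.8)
The plan is to compute the posterior distribution $\mathsf{P}^{\mathsf{In}}_{k\textrm{-}\mathsf{Del}}(\bfx\mid\bfy)$ in closed form and then substitute it into the definition of the conditional entropy. First I would observe that the transmitted words $\bfx\in\Sigma_q^n$ that are consistent with a received $\bfy$ are exactly the $k$-supersequences of $\bfy$, i.e.\ the elements of $I_k(\bfy)$, since $\omega_{\bfy}(\bfx)>0$ if and only if $\bfy$ is a subsequence of $\bfx$. Applying Bayes' rule with the channel law $\mathsf{Pr}^{\mathsf{Out}}_{k\textrm{-}\mathsf{Del}}(\bfy\mid\bfx)=\omega_{\bfy}(\bfx)/\binom{n}{k}$ and invoking uniform transmission (so that $\mathsf{P}\{\bfx\}=q^{-n}$ is the same for every $\bfx$), both the constant $\binom{n}{k}$ and the uniform prior cancel against the normalization $\mathsf{P}\{\bfy\}$, leaving the posterior proportional to the embedding number:
$$\mathsf{P}^{\mathsf{In}}_{k\textrm{-}\mathsf{Del}}(\bfx\mid\bfy)=\frac{\omega_{\bfy}(\bfx)}{\sum_{\bfx'\in I_k(\bfy)}\omega_{\bfy}(\bfx')}.$$

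The key step, and the one I expect to be the main obstacle, is the combinatorial identity $\sum_{\bfx\in I_k(\bfy)}\omega_{\bfy}(\bfx)=\binom{n}{k}q^k$, which I would prove by double counting the pairs $(\bfx,\iota)$ where $\bfx\in I_k(\bfy)$ and $\iota$ is a distinguished embedding of $\bfy$ into $\bfx$. Every such pair is specified uniquely by choosing the $k$ coordinates of $\bfx$ that lie \emph{outside} the image of $\iota$ (there are $\binom{n}{k}$ choices) together with the symbols placed in those $k$ coordinates (there are $q^k$ choices), since the remaining $n-k$ coordinates are forced to spell out $\bfy$ in order. Conversely, any choice of $k$ positions and $k$ symbols produces a word $\bfx\in I_k(\bfy)$ of length $n$ carrying exactly the embedding that reads off $\bfy$ on the complementary coordinates, and this correspondence is a bijection. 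Hence the total count of pairs is $\binom{n}{k}q^k$, which gives $\mathsf{P}^{\mathsf{In}}_{k\textrm{-}\mathsf{Del}}(\bfx\mid\bfy)=\omega_{\bfy}(\bfx)/\bigl(\binom{n}{k}q^k\bigr)$.

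With the posterior in hand, the remainder is a routine manipulation of the entropy sum $\mathsf{H}^{\mathsf{In}}_{k\textrm{-}\mathsf{Del}}(\bfy)=-\sum_{\bfx\in I_k(\bfy)}\mathsf{P}^{\mathsf{In}}_{k\textrm{-}\mathsf{Del}}(\bfx\mid\bfy)\log\mathsf{P}^{\mathsf{In}}_{k\textrm{-}\mathsf{Del}}(\bfx\mid\bfy)$. Splitting the logarithm of the ratio as $\log\omega_{\bfy}(\bfx)-\log\bigl(\binom{n}{k}q^k\bigr)$ and using the identity once more—this time to note that the weights $\omega_{\bfy}(\bfx)/\bigl(\binom{n}{k}q^k\bigr)$ sum to $1$, so the coefficient of the constant term is exactly $\log\bigl(\binom{n}{k}q^k\bigr)$—yields the stated formula after collecting terms. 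I would also remark that, via the preceding lemma relating input and output entropies under uniform transmission, the analogous statement for the output entropy of $k\textrm{-}\mathsf{Ins}$ is immediate, so no separate argument is needed there.
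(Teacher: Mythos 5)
Your proposal is correct and follows essentially the same route as the paper's proof: plug the posterior $\mathsf{P}^{\mathsf{In}}_{k\textrm{-}\mathsf{Del}}(\bfx\mid\bfy)=\omega_{\bfy}(\bfx)/\bigl(\binom{n}{k}q^k\bigr)$ into the entropy sum and split the logarithm. The only difference is that you explicitly justify this posterior via Bayes' rule and the double-counting identity $\sum_{\bfx\in I_k(\bfy)}\omega_{\bfy}(\bfx)=\binom{n}{k}q^k$, a normalization step the paper's one-line computation takes for granted.
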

\begin{proof}
 
\begin{align*}
\mathsf{H}^{\mathsf{In}}_{k\textrm{-}\mathsf{Del}} (\bfy) &=
- \sum_{\bfx \in I_k(\bfy) }\left( \mathsf{Pr}_{k\textrm{-}\mathsf{Del}}^{\mathsf{In}}\{\bfx| \bfy\} \cdot \log \left( \mathsf{Pr}_{k\textrm{-}\mathsf{Del}}^{\mathsf{In}}\{\bfx| \bfy\} \right) \right) 
\\&=
- \sum_{\bfx \in I_k(\bfy) } \frac{\omega_{\bfy}(\bfx)}{\binom{n}{k} q^k } \cdot \log \left( \frac{\omega_{\bfy}(\bfx)}{\binom{n}{k} q^k }\right)  
\\ & = 
\log \left( \binom{n}{k} q^k\right)  -\frac{1}{\binom{n}{k}q^k}\sum_{\bfx \in I_k(\bfy) } \omega_{\bfy}(\bfx) \cdot \log \left( {\omega_{\bfy}(\bfx)}\right).
\end{align*} 
  
\end{proof}
In the special case of $k=1$, we have the next corollary.

 
\begin{corollary}
\label{cor:entropy-single-deletion}
$\mathsf{H}^{\mathsf{In}}_{1\textrm{-}\mathsf{Del}} (\bfy)$ is invariant to permutations of run length profile of the channel output $\bfy \in \Sigma_q^{n-1}$ and it is given by the following expression

\begin{align*}
\mathsf{H}^{\mathsf{In}}_{1\textrm{-}\mathsf{Del}} (\bfy) = \log(nq) - \frac{1}{nq}\sum_{i=1}^{\rho(\bfy)}(r_{i}+1)\log(r_{i}+1),
\end{align*}
 
where $r_{i}, i \in [\rho(\bfy)]$, denotes the length of the $i$-{th} run of $\bfy$. 
\end{corollary}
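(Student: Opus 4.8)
The plan is to specialize Lemma~\ref{lm:entropyIn-k-del} to $k=1$, where $\binom{n}{1}q^{1}=nq$, so that the statement reduces to evaluating the single sum
$S \triangleq \sum_{\bfx\in I_1(\bfy)}\omega_{\bfy}(\bfx)\log\bigl(\omega_{\bfy}(\bfx)\bigr)$
and showing it equals $\sum_{i=1}^{\rho(\bfy)}(r_i+1)\log(r_i+1)$. The engine of the computation is one structural fact about single deletions: for $\bfx\in\Sigma_q^n$ with $\bfy\in D_1(\bfx)$, deleting positions $i<j$ of $\bfx$ yields the same word if and only if $x_i=x_{i+1}=\cdots=x_j$, i.e. if and only if $i$ and $j$ lie in a common run. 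I would establish this first (a short index comparison), since it shows that the deletion positions of $\bfx$ that recover $\bfy$ constitute exactly one full run of $\bfx$; hence $\omega_{\bfy}(\bfx)$ equals the length of that unique ``active'' run. This converts every embedding number into a run length.

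Throughout write $R=\rho(\bfy)$ and $(r_1,\dots,r_R)=\cR\cL(\bfy)$. I would then classify each $\bfx\in I_1(\bfy)$ by the length of its active run. If the active run has length $\ell\ge 2$, deleting one of its symbols shortens it to length $\ell-1\ge 1$, so $\bfx$ is obtained from $\bfy$ by lengthening a single run of $\bfy$ by one; conversely each run $i$ of $\bfy$ yields exactly one such $\bfx$, with $\omega_{\bfy}(\bfx)=r_i+1$. These $R$ supersequences contribute $\sum_{i=1}^{R}(r_i+1)\log(r_i+1)$ to $S$. Every remaining $\bfx\in I_1(\bfy)$ has active run of length exactly $1$ --- these are precisely the words obtained by inserting a symbol that forms a brand-new run (splitting a run of $\bfy$ internally, or inserting a symbol distinct from its neighbours in a gap between runs or at either end) --- and each such term is $1\cdot\log 1=0$. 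Thus $S=\sum_{i=1}^{R}(r_i+1)\log(r_i+1)$, and substituting into the $k=1$ form of Lemma~\ref{lm:entropyIn-k-del} gives the claimed expression. Invariance under permutations of $\cR\cL(\bfy)$ is then immediate, since the right-hand side depends only on the multiset $\{r_1,\dots,r_R\}$.

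The step most worth care is checking that this dichotomy is \emph{exhaustive and non-overlapping}: no supersequence is missed and none is double-counted. I would confirm this with two consistency checks. First, counting the length-$1$-active-run words by location gives $q-R+(q-1)(n-1)$, so the total is $R+\bigl(q-R+(q-1)(n-1)\bigr)=n(q-1)+1=|I_1(\bfy)|$ (using $\sum_i r_i=n-1$), matching the insertion-ball size quoted above. Second, $\sum_{\bfx\in I_1(\bfy)}\omega_{\bfy}(\bfx)=nq$, exactly the normalization $\binom{n}{1}q^{1}$, because each of the $nq$ ordered single insertions into $\bfy$ lands on some $\bfx$ and $\omega_{\bfy}(\bfx)$ counts how many land on $\bfx$. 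The only genuinely delicate point is the uniqueness of the active run, which the run-merging identity of the first paragraph settles cleanly; the remainder is bookkeeping over run boundaries and internal split positions.
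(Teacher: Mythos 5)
Your proposal is correct and follows essentially the same route as the paper's proof: specialize Lemma~\ref{lm:entropyIn-k-del} to $k=1$ and decompose $I_1(\bfy)$ into the $\rho(\bfy)$ run-prolonging supersequences, with embedding weight $r_i+1$, and the remaining supersequences of embedding weight $1$, which contribute nothing to the sum. The only difference is that the paper imports this decomposition of the insertion ball from~\cite{BGSY21, SBGYY22}, whereas you derive it yourself from the run-merging property of single deletions (together with your two counting sanity checks), making the argument self-contained but mathematically identical.
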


\begin{proof}
From~\cite{BGSY21, SBGYY22}, 
$I_1(\bfy)$ has $\rho(\bfy)$ sequences that can be obtained by prolonging an existing run, each with embedding weight $r_{i}+1$ for $1 \le i \le \rho(\bfy)$.  The embedding weight of the remaining $|I_{1}(\bfy)| - \rho(\bfy) =  q+|\bfy|(q-1) - \rho(\bfy)$ sequences is one. Therefore from Lemma~\ref{lm:entropyIn-k-del}, it follows that,

\begin{align*}
\mathsf{H}^{\mathsf{In}}_{1\textrm{-}\mathsf{Del}} (\bfy) & = \log(nq) - \frac{1}{nq}\sum_{i=1}^{\rho(\bfy)}(r_{i}+1)\log(r_{i}+1).
\end{align*}

\end{proof}

Similarly, we have the next lemma for the $k\textrm{-}\mathsf{Ins}$ channel. 
\begin{lemma}
\label{lm:entropyIn-k-ins}
For any integers $n$ and $k$, such that $k \le n$, and for any word $\bfy \in \Sigma_q^{n+k}$, we have that, 
$$\mathsf{H}^{\mathsf{In}}_{k\textrm{-}\mathsf{Ins}} (\bfy) = \log  \binom{n+k}{k} -\frac{1}{\binom{n+k}{k}}\sum_{\bfx \in D_k(\bfy) } \omega_{\bfy}(\bfx) \cdot \log \left( {\omega_{\bfy}(\bfx)}\right),$$
where $ D_k(\bfy)$ denotes the $k$-deletion ball~\cite{L66}. In particular, for $k=1$,   
$$\mathsf{H}^{\mathsf{In}}_{1\textrm{-}\mathsf{Ins}} (\bfy) = \log(n+1) -\frac{1}{n+1}\sum_{i=1}^{\rho(\bfy)}r_{i}\log(r_{i}),$$
where $r_{i}, i \in [\rho(\bfy)]$ denotes the length of the $i$-{th} run of $\bfy$.
\end{lemma}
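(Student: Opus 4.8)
The plan is to mirror the proof of Lemma~\ref{lm:entropyIn-k-del} step for step, interchanging the roles of the insertion ball and the deletion ball. Under uniform transmission every length-$n$ input word carries the same prior, so Bayes' rule gives $\mathsf{Pr}^{\mathsf{In}}_{k\textrm{-}\mathsf{Ins}}\{\bfx \mid \bfy\} \propto \mathsf{Pr}^{\mathsf{Out}}_{k\textrm{-}\mathsf{Ins}}\{\bfy \mid \bfx\}$, and since the candidate inputs are exactly the length-$n$ subsequences of the received $\bfy \in \Sigma_q^{n+k}$, i.e. $\bfx$ ranges over $D_k(\bfy)$, the posterior is proportional to the embedding number $\omega_{\bfx}(\bfy)$ of the shorter word $\bfx$ in $\bfy$. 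The first thing I would pin down is the normalizing constant, namely the identity $\sum_{\bfx \in D_k(\bfy)} \omega_{\bfx}(\bfy) = \binom{n+k}{k}$. This follows from a complementary bijection: an embedding of a length-$n$ subsequence into $\bfy$ is the same data as a choice of the $k$ positions of $\bfy$ that are \emph{not} used, so summing $\omega_{\bfx}(\bfy)$ over all distinct $\bfx \in D_k(\bfy)$ simply counts all $\binom{n+k}{k}$ size-$k$ deletion patterns of $\bfy$. Hence $\mathsf{Pr}^{\mathsf{In}}_{k\textrm{-}\mathsf{Ins}}\{\bfx \mid \bfy\} = \omega_{\bfx}(\bfy)/\binom{n+k}{k}$, and substituting this into $-\sum p \log p$ and splitting the logarithm of the quotient yields the displayed closed form exactly as in Lemma~\ref{lm:entropyIn-k-del}.

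For the $k=1$ specialization I would invoke the run structure of $\bfy$. It is classical that the single-deletion ball $D_1(\bfy)$ has exactly $\rho(\bfy)$ distinct elements, one per run: deleting any symbol of the $i$-th run yields the same word $\bfx_i$, and deletions in distinct runs yield distinct words. The embedding number $\omega_{\bfx_i}(\bfy)$ is then read off from the same complementary bijection used above, since $\omega_{\bfx_i}(\bfy)$ equals the number of positions of $\bfy$ whose deletion produces $\bfx_i$; by the distinctness statement these are precisely the $r_i$ positions of run $i$, so $\omega_{\bfx_i}(\bfy) = r_i$. Plugging $\omega_{\bfx_i}(\bfy) = r_i$ together with $\binom{n+1}{1} = n+1$ into the general formula collapses the sum to $\sum_{i=1}^{\rho(\bfy)} r_i \log r_i$, which is the asserted expression; as a consistency check, $\sum_i r_i = |\bfy| = n+1$ matches the $k=1$ instance of the normalizing identity.

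The Bayes computation and the logarithm split are routine and identical to Lemma~\ref{lm:entropyIn-k-del}, so the content lives entirely in the two combinatorial facts: the normalizing identity $\sum_{\bfx \in D_k(\bfy)} \omega_{\bfx}(\bfy) = \binom{n+k}{k}$ and, for $k=1$, the claim $\omega_{\bfx_i}(\bfy) = r_i$. I expect the run-counting to be the only place requiring genuine care, since one must verify both that the $\rho(\bfy)$ words $\bfx_i$ are genuinely distinct and that they exhaust $D_1(\bfy)$, and that no embedding of $\bfx_i$ recruits a symbol from a neighbouring run, which holds precisely because runs are maximal and adjacent runs carry different symbols. The analogous run fact was already used for the deletion channel in Corollary~\ref{cor:entropy-single-deletion} (there for the insertion ball), so I would cite the same sources~\cite{BGSY21, SBGYY22} rather than reprove it.
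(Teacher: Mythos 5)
Your proposal is correct and follows exactly the route the paper intends: the paper omits this proof, stating only that it proceeds ``similarly'' to Lemma~\ref{lm:entropyIn-k-del}, and your argument—Bayes' rule under the uniform prior, the normalizing identity $\sum_{\bfx \in D_k(\bfy)} \omega_{\bfx}(\bfy) = \binom{n+k}{k}$ via the complementary-positions bijection, and the run-structure facts ($|D_1(\bfy)|=\rho(\bfy)$ with embedding numbers $r_i$) for the $k=1$ case—is precisely that intended mirroring. No gaps; the consistency check $\sum_i r_i = n+1$ is a nice touch.
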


\section{Extremum Values of the Input Entropy}\label{sec:extremum}
In this section we first find the channel outputs that have maximum and minimum entropy among all sequences in $\Sigma_{q,R}^m$, where $R\in[m]$, for $m \in \{ n-1, n+1 \}$. Then as corollaries, we find the maximum and minimum values of the input entropies of the channels $1\textrm{-}\mathsf{Del}$ and $1\textrm{-}\mathsf{Ins}$. We then derive the minimum value of the input entropy for $2\textrm{-}\mathsf{Del}$ channel.

\subsection{The Single-Deletion Channel: Maximum Input Entropy}
This subsection studies the channel outputs $\bfy \in \Sigma_q^m$ that maximize the input entropy of the $1\textrm{-}\mathsf{Del}$ channel, where $m\triangleq n-1$. 

\begin{lemma}\label{lm:min-entropy-R-runs}
For $R\in [m]$ and $r\equiv m \bmod R$, the maximum input entropy among all channel outputs in $\Sigma_{q,R}^{m}$ is
 \begin{align*}
\begin{split}
\max_{\bfy \in \Sigma_{q,R}^{m}}{\mathsf{H}^{\mathsf{In}}_{1\textrm{-}\mathsf{Del}} (\bfy)} &= \log(nq) -\frac{r}{nq}\left(\ceil{\frac{m}{R}}+1\right)\log\left(\ceil{\frac{m}{R}}+1\right) \\
&-\frac{R-r}{nq}\left(\floor{\frac{m}{R}}+1\right)\log\left(\floor{\frac{m}{R}}+1\right),
\end{split}
\end{align*}
 and it is attained only by balanced channel outputs. 
\end{lemma}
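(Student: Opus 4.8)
The plan is to reduce the maximization to a discrete convex-optimization problem via the closed form in Corollary~\ref{cor:entropy-single-deletion}. Since $\rho(\bfy)=R$ and the run lengths $(r_1,\dots,r_R)$ satisfy $r_i\ge 1$ and $\sum_{i=1}^R r_i = m$, that corollary gives $\mathsf{H}^{\mathsf{In}}_{1\textrm{-}\mathsf{Del}}(\bfy)=\log(nq)-\frac{1}{nq}\sum_{i=1}^R g(r_i)$ with $g(t)\triangleq (t+1)\log(t+1)$, so maximizing the entropy over $\Sigma_{q,R}^{m}$ is equivalent to minimizing $S(r_1,\dots,r_R)\triangleq\sum_{i=1}^R g(r_i)$ over all integer run-length profiles with $r_i\ge 1$ and $\sum_i r_i=m$. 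First I would record that this objective depends only on the multiset of run lengths (Corollary~\ref{cor:entropy-single-deletion} is permutation-invariant), so it suffices to reason about profiles, and that every balanced profile is realizable by some $\bfy\in\Sigma_{q,R}^{m}$, since $q\ge 2$ permits coloring the runs so that no two adjacent runs share a symbol.

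Next I would establish strict convexity of $g$. Writing $u=t+1$ and viewing $g$ as $u\log u$, its second derivative is $(\log e)/u>0$ for $u>0$, so $g$ is strictly convex, and in particular its first differences $g(t+1)-g(t)$ are strictly increasing in $t$. The core of the argument is then a smoothing (majorization) exchange step: if a feasible profile has two entries with $r_i\ge r_j+2$, then replacing $(r_i,r_j)$ by $(r_i-1,r_j+1)$ keeps the profile feasible (the decremented entry stays $\ge 1$) and strictly decreases $S$, because strict monotonicity of the first differences gives $g(r_i)-g(r_i-1)>g(r_j+1)-g(r_j)$. Iterating this step shows that any minimizer must have all run lengths within $\{\lfloor m/R\rfloor,\lceil m/R\rceil\}$, and since they sum to $m$, exactly $r\equiv m\bmod R$ of them equal $\lceil m/R\rceil$ while the remaining $R-r$ equal $\lfloor m/R\rfloor$; this is precisely the balanced profile. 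The strictness of the exchange inequality yields that any non-balanced profile is strictly suboptimal, giving the claimed uniqueness.

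Finally I would substitute the balanced profile back into the entropy formula: with $r$ runs of length $\lceil m/R\rceil$ and $R-r$ runs of length $\lfloor m/R\rfloor$, the sum $\sum_i g(r_i)$ equals $r(\lceil m/R\rceil+1)\log(\lceil m/R\rceil+1)+(R-r)(\lfloor m/R\rfloor+1)\log(\lfloor m/R\rfloor+1)$, and dividing by $nq$ and subtracting from $\log(nq)$ reproduces the stated expression (recall $n=m+1$). I expect the only delicate point to be the discrete, integer nature of the optimization: convexity by itself only locates the real-valued optimum at the perfectly equal split, so the uniqueness assertion genuinely requires the explicit exchange argument to pin down that the integer optimum is the balanced profile and that no other integer profile ties it. The remaining manipulations are routine.
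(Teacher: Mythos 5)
Your proof is correct and takes essentially the same route as the paper: your smoothing exchange replacing $(r_i,r_j)$ by $(r_i-1,r_j+1)$, justified by the strictly increasing first differences $(t+1)\log(t+1)-t\log(t)$, is exactly the paper's exchange argument (where that difference function is the increasing $g$), merely recast as minimizing the convex sum $\sum_i (r_i+1)\log(r_i+1)$ instead of as a contradiction on a maximum-entropy sequence. The extra points you flag (realizability of the balanced profile for $q\ge 2$ and pinning down the integer optimum via the exchange rather than real-valued convexity) are valid refinements that the paper leaves implicit.
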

\begin{proof} 
Let $\bfy \in \Sigma_{q,R}^{m}$ be a sequence with maximum entropy and assume its run length profile is $(r_1,r_2,\ldots,r_R)$. Assume to the contrary that $\bfy$ is not one of the balanced sequences. Then, there exist indices $\ell\neq s$ such that $ r_s< \floor{\frac{m}{\rho(\bfy)}} $ and $\ceil{\frac{m}{\rho(\bfy)}} < r_\ell$ and in particular $r_\ell - r_s \geq2$. Consider a sequence $\bfy' \in \Sigma_{q,R}^{m}$, with run length profile $(r_1',r_2',\ldots,r_R')$, where $r'_\ell = r_\ell-1$, $r'_s=r_s+1$, and for any $k \notin \{ \ell, s \}$, $r'_k=r_k$.

Next, consider the entropies difference  
\begin{align*}
\mathsf{H}^{\mathsf{In}}_{1\textrm{-}\mathsf{Del}} (\bfy') \hspace{-.5ex}-\hspace{-.5ex} \mathsf{H}^{\mathsf{In}}_{1\textrm{-}\mathsf{Del}} (\bfy)
&\hspace{-.5ex}=-\frac{1}{nq}\big(r_{\ell}\log r_{\ell}+ (r_{s}+2)\log(r_{s}+2) & \\
& -(r_{\ell}\hspace{-.5ex}+\hspace{-.5ex}1)\log(r_{\ell}\hspace{-.5ex}+\hspace{-.5ex}1)\hspace{-.5ex}-\hspace{-.5ex} (r_{s}\hspace{-.5ex}+\hspace{-.5ex}1)\log(r_{s}\hspace{-.5ex}+\hspace{-.5ex}1) \big).& 
\end{align*}
 
\hspace{-2.1ex}Let $g(r) \triangleq (r+1)\log(r+1) - r\log(r)$, and note that $g$ is an increasing function w.r.t. $r$. In addition, since $r_\ell > r_s+1$,
\begin{align*}
    &\mathsf{H}^{\mathsf{In}}_{1\textrm{-}\mathsf{Del}} (\bfy') - \mathsf{H}^{\mathsf{In}}_{1\textrm{-}\mathsf{Del}} (\bfy) = \frac{g(r_\ell) - g(r_s+1)}{nq} > 0.
\end{align*}
 
This is a contradiction to the assumption on $\bfy$. Therefore, among all sequences in $\Sigma_{q,R}^{m}$, the maximum entropy is attained by the balanced channel outputs. The value of this maximum entropy can be simply derived from the run length profile of the balanced channel outputs. 
%
%
%
\end{proof}

\begin{corollary} 
The maximum input entropy among all channel outputs in $\Sigma_{q}^{m}$ is
$$\mathsf{max}^{\mathsf{In}}_{1\textrm{-}\mathsf{Del}}(n)=\log(nq) -\frac{2m}{nq}$$
and is only attained by channel outputs with $m$ runs.
\end{corollary}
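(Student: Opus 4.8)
The plan is to reduce the problem to a single-variable monotonicity estimate, exploiting the fact that here the number of runs is free to vary. By \Cref{cor:entropy-single-deletion}, for any $\bfy \in \Sigma_q^m$ with run length profile $(r_1,\ldots,r_{\rho(\bfy)})$ we have $\mathsf{H}^{\mathsf{In}}_{1\textrm{-}\mathsf{Del}}(\bfy) = \log(nq) - \frac{1}{nq}\sum_{i=1}^{\rho(\bfy)}(r_i+1)\log(r_i+1)$, so maximizing the input entropy over all of $\Sigma_q^m$ is equivalent to minimizing $\Phi(\bfy) \triangleq \sum_{i}(r_i+1)\log(r_i+1)$ subject to $r_i \ge 1$ and $\sum_i r_i = m$. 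The crucial difference from \Lref{lm:min-entropy-R-runs} is that here $R = \rho(\bfy)$ is not fixed, which is exactly what drives the extremum to the boundary case in which every run has length one.

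The key step is the per-run inequality $(r+1)\log(r+1) \ge 2r$ for every integer $r \ge 1$, with equality if and only if $r = 1$. I would prove this by setting $h(r) \triangleq (r+1)\log(r+1) - 2r$, noting $h(1) = 2\log 2 - 2 = 0$ (recall $\log$ denotes $\log_2$ throughout), and showing $h$ is strictly increasing on $[1,\infty)$: its derivative is $h'(r) = \log(r+1) + \log e - 2$, which is already positive at $r=1$ and increasing thereafter. Summing this bound over the runs of $\bfy$ and using $\sum_i r_i = m$ gives $\Phi(\bfy) \ge 2m$, with equality precisely when every $r_i = 1$.

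Substituting $\Phi(\bfy) \ge 2m$ back into the entropy expression yields $\mathsf{H}^{\mathsf{In}}_{1\textrm{-}\mathsf{Del}}(\bfy) \le \log(nq) - \frac{2m}{nq}$. Equality forces all runs to have length one, i.e. $\rho(\bfy) = m$; conversely any $\bfy$ with $m$ runs (such words exist, e.g. the alternating-type words $\texttt{0101}\cdots$) has profile $(1,\ldots,1)$ and attains the bound. Hence the maximum is exactly $\log(nq) - \frac{2m}{nq}$ and is achieved only by channel outputs with $m$ runs.

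I expect the only real obstacle to be handling the equality case cleanly, namely verifying that equality in the summed bound is equivalent to $R = m$; the monotonicity of $h$ is routine calculus. As a consistency check, one can substitute $R = m$ into \Lref{lm:min-entropy-R-runs} (so $r \equiv m \bmod m = 0$ and all runs are balanced at length one), which reproduces the same value $\log(nq) - \frac{2m}{nq}$.
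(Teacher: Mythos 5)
Your proof is correct, but it takes a genuinely different route from the paper's. The paper also starts from Corollary~\ref{cor:entropy-single-deletion}, but then normalizes the terms $\frac{r_i+1}{m+\rho(\bfy)}$ into a probability vector and applies Jensen's inequality to bound the entropy for each fixed number of runs $\rho(\bfy)$, obtaining $\mathsf{H}^{\mathsf{In}}_{1\textrm{-}\mathsf{Del}}(\bfy) \leq f(\rho(\bfy))$ with $f(x) = \log(nq) + \frac{m+x}{nq}\log\bigl(\frac{x}{m+x}\bigr)$; it then shows $f$ is increasing on $[1,m]$, so the maximum is $f(m) = \log(nq)-\frac{2m}{nq}$. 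This is a two-stage argument (optimize the profile for fixed $\rho$, then optimize over $\rho$), structurally parallel to Lemma~\ref{lm:min-entropy-R-runs}. You instead collapse both stages into a single per-run chord inequality $(r+1)\log(r+1) \geq 2r$ for integers $r \geq 1$, with equality iff $r=1$, and sum it using $\sum_i r_i = m$. Your calculus is right: with $\log$ in base $2$, $h(r) = (r+1)\log(r+1)-2r$ satisfies $h(1)=0$ and $h'(r) = \log(r+1)+\log e - 2 > 0$ for $r \geq 1$, so the inequality is strict for $r>1$. Your approach is more elementary (no Jensen, no auxiliary single-variable function) and makes the uniqueness of the maximizer immediate, since equality must hold term by term; the paper's approach is slightly more informative as a byproduct, since the intermediate bound $f(\rho(\bfy))$ quantifies how the achievable entropy degrades with the number of runs and dovetails with the balanced-output characterization of Lemma~\ref{lm:min-entropy-R-runs}. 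Your consistency check against that lemma at $R=m$ is also valid.
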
 
\begin{proof} 
Let $\bfy \in \Sigma_{q}^{m}$ and let $\cR\cL(\bfy) = (r_1,r_2,\ldots,r_{\rho(\bfy)})$. From Corollary \ref{cor:entropy-single-deletion}, it can be shown that,  
\begin{align*}
\mathsf{H}^{\mathsf{In}}_{1\textrm{-}\mathsf{Del}} (\bfy) 
& = \log(nq)+ \frac{m+\rho(\bfy)}{nq}\sum_{i=1}^{\rho(\bfy)}\frac{r_{i}+1}{m+\rho(\bfy)}\log\left( \frac{m+\rho(\bfy)}{r_{i}+1}\right) \\  
& \ \ \ -\frac{m+\rho(\bfy)}{nq}\log(m+\rho(\bfy)).%
\end{align*}%
 
Since $\displaystyle\sum_{i=1}^{\rho(\bfy)}\frac{r_{i}+1}{m+\rho(\bfy)} = 1$, Jensen's inequality implies that
$$
\mathsf{H}^{\mathsf{In}}_{1\textrm{-}\mathsf{Del}} (\bfy) \leq \log(nq) + \frac{m+\rho(\bfy)}{nq}\log\left(\frac{\rho(\bfy)}{m+\rho(\bfy)}\right). 
$$
 
Let $f(x):[1,m]\to \R$ be defined as 
$$
f(x) \triangleq \log(nq) + \frac{m+x}{nq}\log\left(\frac{x}{m+x}\right). 
$$
$f$ is increasing w.r.t $x$ and
hence, for $r \in [1,m]$, $f(r) \leq f(m)$, and equality is attained if and only if $r=m$. Hence, among all sequences in $\Sigma_q^{m}$, channel outputs with $\rho(\bfy) = m$ have the maximum entropy $\mathsf{max}^{\mathsf{In}}_{1\textrm{-}\mathsf{Del}}(n) = \log(nq) -\frac{2m}{nq}.$
\end{proof}

\subsection{The Single-Deletion Channel: Minimum Input Entropy}
Similarly to the previous subsection, here we study the channel outputs $\bfy \in \Sigma_q^m$ that minimize the input entropy. 

\begin{lemma}\label{lem:min_entropy}
 Let $R\in [m]$, the minimum input entropy among all channel outputs in  $\Sigma_{q,R}^{m}$ is
\bean
\min_{\bfy \in \Sigma_{q,R}^{m}} {\mathsf{H}^{\mathsf{In}}_{1\textrm{-}\mathsf{Del}} (\bfy)} &=\log(nq) -\frac{(m-R+2)\log(m-R+2) + 2(R-1)}{nq}
\eean
and it is attained only by skewed channel outputs. 
\end{lemma}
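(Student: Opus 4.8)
The plan is to mirror the exchange (smoothing) argument of Lemma~\ref{lm:min-entropy-R-runs}, but now to push the run length profile toward the opposite, most unbalanced, extreme. By Corollary~\ref{cor:entropy-single-deletion}, for any $\bfy\in\Sigma_{q,R}^{m}$ with run length profile $(r_1,\ldots,r_R)$ we have
$$\mathsf{H}^{\mathsf{In}}_{1\textrm{-}\mathsf{Del}} (\bfy) = \log(nq) - \frac{1}{nq}\sum_{i=1}^{R}(r_{i}+1)\log(r_{i}+1),$$
so minimizing the entropy over $\Sigma_{q,R}^{m}$ is equivalent to maximizing the sum $S(\bfy)\triangleq\sum_{i=1}^{R}(r_i+1)\log(r_i+1)$ subject to $\sum_i r_i=m$ and each $r_i\ge1$. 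Since $r\mapsto(r+1)\log(r+1)$ is convex, the maximum of such a sum of convex terms under a fixed-sum constraint should be attained at the most skewed profile, which is exactly the claim.

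To make this rigorous I would argue by contradiction. Suppose $\bfy$ minimizes the input entropy over $\Sigma_{q,R}^{m}$ but is not skewed. A profile fails to be skewed precisely when at least two of its runs have length at least $2$; choose such indices $\ell\neq s$ with $r_\ell\ge r_s\ge2$. Form $\bfy'\in\Sigma_{q,R}^{m}$ by transferring one unit of length from run $s$ to run $\ell$, i.e.\ $r'_\ell=r_\ell+1$, $r'_s=r_s-1$, and $r'_k=r_k$ for $k\notin\{\ell,s\}$. Because $r_s\ge2$ we have $r'_s\ge1$, so all run lengths stay positive, the total length is still $m$, and the profile remains realizable over $\Sigma_q$ for $q\ge2$ (pick $R$ symbols with distinct consecutive values); hence $\rho(\bfy')=R$ and $\bfy'\in\Sigma_{q,R}^{m}$. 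Reusing the increasing function $g(r)\triangleq(r+1)\log(r+1)-r\log r$ from the previous lemma, a direct computation of the changed terms gives
$$\mathsf{H}^{\mathsf{In}}_{1\textrm{-}\mathsf{Del}} (\bfy) - \mathsf{H}^{\mathsf{In}}_{1\textrm{-}\mathsf{Del}} (\bfy') = \frac{g(r_\ell+1)-g(r_s)}{nq}.$$
Since $g$ is increasing and $r_\ell+1>r_s$, this quantity is strictly positive, so $\bfy'$ has strictly smaller entropy, contradicting the minimality of $\bfy$.

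Consequently every minimizer must be skewed, and since by Corollary~\ref{cor:entropy-single-deletion} the entropy depends only on the multiset of run lengths, all skewed channel outputs attain the same value and are the only minimizers. The closed form then follows by substituting the skewed profile---$R-1$ runs of length $1$ together with one run of length $m-R+1$---into $S$, which yields $S = 2(R-1) + (m-R+2)\log(m-R+2)$ and hence the stated expression. I expect the only delicate points to be purely bookkeeping: confirming that the exchange keeps the sequence inside $\Sigma_{q,R}^{m}$ (positivity of run lengths and preservation of the run count) and verifying the strict inequality $r_\ell+1>r_s$ so that strict monotonicity of $g$ forces uniqueness of the skewed minimizer. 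Given the machinery already established in Lemma~\ref{lm:min-entropy-R-runs}, these are routine.
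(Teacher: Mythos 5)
Your proof is correct and follows essentially the same route as the paper's: the identical exchange argument transferring one unit of run length from a shorter run (length $\ge 2$) to a longer run, using the strict monotonicity of $g(r)=(r+1)\log(r+1)-r\log r$ to contradict minimality, then substituting the skewed profile to get the closed form. The only differences are presentational (you make explicit the realizability of the modified profile and the equivalence ``not skewed $\Leftrightarrow$ two runs of length $\ge 2$,'' which the paper leaves implicit).
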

\begin{proof} 
Let $\bfy \in \Sigma_{q,R}^{m}$ be the sequence with minimum entropy. Assume to the contrary that $\bfy$ is not one of the skewed sequences. Then there exist indices $\ell\neq s$ such that $1< r_s \leq r_\ell < m-(R-1)$. Consider the sequence $\bfy' \in \Sigma_{q,R}^{m}$, with run length profile $(r_1',r_2',\ldots,r_R')$, where
\bean
r_k' = \begin{cases} 
      r_k+1 & k = \ell \\
      r_k-1 & k = s\\
      r_k & \text{otherwise.} 
 \end{cases}
 \eean
Now, consider the entropies difference
\begin{align*}
&\mathsf{H}^{\mathsf{In}}_{1\textrm{-}\mathsf{Del}} (\bfy') - \mathsf{H}^{\mathsf{In}}_{1\textrm{-}\mathsf{Del}} (\bfy)\\
\begin{split}
&=\frac{1}{nq}\left((r_{\ell}+1)\log(r_{\ell}+1) + (r_{s}+1)\log(r_{s}+1) \right)\\
&-\frac{1}{nq}\left((r_{\ell}+2)\log(r_{\ell}+2) + (r_{s})\log(r_{s})\right)
\end{split}
\\
&=\frac{h(r_s) - h(r_\ell+1)}{nq},
\end{align*}
where $h$ is an increasing function defined in the proof of the previous lemma.  Therefore, as $r_s < r_\ell+1$ 
\begin{align*}
    &\mathsf{H}^{\mathsf{In}}_{1\textrm{-}\mathsf{Del}} (\bfy') - \mathsf{H}^{\mathsf{In}}_{1\textrm{-}\mathsf{Del}} (\bfy) = \frac{h(r_s) - h(r_\ell+1)}{nq} < 0.
\end{align*}
This is a contradiction as $\bfy$ has minimum entropy among all sequences in  $\Sigma_{q,R}^{m}$. Thus,  $\bfy$ is skewed and its input entropy is
\bean
\min_{\bfy \in \Sigma_{q,R}^{m}}{\mathsf{H}^{\mathsf{In}}_{1\textrm{-}\mathsf{Del}} (\bfy)} &=\log(nq) -\frac{(m-R+2)\log(m-R+2) + 2(R-1)}{nq}.
\eean
\end{proof}
 
The next corollary states that the channel outputs that minimize this entropy have a single run. This extends the results from~\cite{ABC18} to the non-binary case. 
\begin{corollary}\label{cor:min1del}
The minimum input entropy among all channel outputs in $\Sigma_{q}^{m}$ is
$
\mathsf{min}^{\mathsf{In}}_{1\textrm{-}\mathsf{Del}}(n)=\log(nq) -\frac{\log(n)}{q}
$
and is only attained by channel outputs having a single run.
\end{corollary}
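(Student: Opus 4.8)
The plan is to reduce the global minimization to the per-run-count bound already established in Lemma~\ref{lem:min_entropy}. Since $\Sigma_q^{m}$ is the disjoint union $\bigcup_{R=1}^{m}\Sigma_{q,R}^{m}$ and $m=n-1$, the global minimum satisfies
$$
\mathsf{min}^{\mathsf{In}}_{1\textrm{-}\mathsf{Del}}(n) = \min_{R\in[m]}\ \min_{\bfy\in\Sigma_{q,R}^{m}}\mathsf{H}^{\mathsf{In}}_{1\textrm{-}\mathsf{Del}}(\bfy) = \min_{R\in[m]}\left( \log(nq) - \frac{\psi(R)}{nq} \right),
$$
where $\psi(R)\triangleq (m-R+2)\log(m-R+2)+2(R-1)$ and the inner minimum over $\Sigma_{q,R}^{m}$, attained only by skewed outputs, is exactly the value supplied by Lemma~\ref{lem:min_entropy}. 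Because $\log(nq)$ and $nq$ are positive constants, minimizing the input entropy over $R$ is equivalent to \emph{maximizing} $\psi(R)$ over $R\in[m]$.

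First I would show that $\psi$ is strictly decreasing in $R$, so that its maximum sits at $R=1$. I would avoid calculus and instead use a discrete difference, which also sidesteps the fact that $R$ ranges over integers. For $2\le R\le m$, setting $s\triangleq m-R+2\ge 2$, a direct computation gives
$$
\psi(R-1) - \psi(R) = (s+1)\log(s+1) - s\log(s) - 2 = g(s) - 2,
$$
where $g(r)\triangleq (r+1)\log(r+1) - r\log(r)$ is precisely the increasing function introduced in the proof of Lemma~\ref{lm:min-entropy-R-runs}. Since $g$ is increasing and $g(2)=3\log 3 - 2 > 2$ when entropy is measured with the base-$2$ logarithm, it follows that $g(s)>2$ for every $s\ge 2$, whence $\psi(R-1)-\psi(R)>0$ for all $R\ge 2$. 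Thus $\psi$ is strictly decreasing on $[m]$ and attains its unique maximum at $R=1$.

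Finally I would evaluate $\psi(1)=(m+1)\log(m+1)$ and substitute $m=n-1$, so that $m+1=n$ and
$$
\mathsf{min}^{\mathsf{In}}_{1\textrm{-}\mathsf{Del}}(n) = \log(nq) - \frac{n\log n}{nq} = \log(nq) - \frac{\log n}{q}.
$$
The strict monotonicity of $\psi$ guarantees that $R=1$ is the unique minimizing run count; combined with the observation that every word in $\Sigma_{q,1}^{m}$ is a single run and all such words share the run length profile $(m)$ (hence the same entropy by Corollary~\ref{cor:entropy-single-deletion}), the minimum is attained exactly by the channel outputs consisting of a single run, as claimed.

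I expect the main obstacle to be the monotonicity step, specifically verifying the strict inequality $g(s)>2$ for \emph{all} integers $s\ge 2$ rather than merely asymptotically: this is where the base of the logarithm genuinely matters, since the conclusion relies on entropy being measured in bits, and it is resolved by monotonicity of $g$ together with the single base case $g(2)>2$.
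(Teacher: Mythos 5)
Your proof is correct and takes essentially the same route as the paper: reduce to Lemma~\ref{lem:min_entropy} and then minimize the resulting expression over the number of runs $R\in[m]$, evaluating at $R=1$. The only difference is that where the paper states ``it is easy to verify that the input entropy decreases as $R$ decreases,'' you supply that verification explicitly via the discrete difference $\psi(R-1)-\psi(R)=g(s)-2>0$ for $s\ge 2$ (using monotonicity of $g$ and the base case $g(2)=3\log 3-2>2$ in base-$2$), which correctly fills in the step the paper leaves to the reader.
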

\begin{proof}
According to Lemma~\ref{lem:min_entropy}, we have that
$$\min_{ \bfy \in \Sigma_{q,R}^{m}}{ \hspace{-1.5ex} \mathsf{H}^{\mathsf{In}}_{1\textrm{-}\mathsf{Del}} (\bfy)} \hspace{-.5ex}=\hspace{-.5ex}\log(nq) -\frac{(m\hspace{-.4ex}-\hspace{-.4ex}R\hspace{-.4ex}+\hspace{-.4ex}2)\log(m\hspace{-.4ex}-\hspace{-.4ex}R\hspace{-.4ex}+\hspace{-.4ex}2) \hspace{-.4ex}+ \hspace{-.4ex}2(R\hspace{-.4ex}-\hspace{-.4ex}1)}{nq}.$$
It is easy to verify that the input entropy decreases as $R$ decreases. Therefore,
$\mathsf{min}^{\mathsf{In}}_{1\textrm{-}\mathsf{Del}}(n) = \log(nq) -\frac{\log(n)}{q} $ and it is attained if and only if $\rho(\bfy) = 1$.
\end{proof}

\subsection{The Single-Insertion Channel}
Using similar techniques as in the previous subsections, we can analyze the $1\textrm{-}\mathsf{Ins}$ channel for  $m\triangleq n+1$. 
\begin{theorem} 
If $R\in [m]$ and $r\equiv m \bmod R$, then 
\bean
\max_{\bfy \in \Sigma_{q,R}^{m}} \hspace{-1.4ex}{\mathsf{H}^{\mathsf{In}}_{1\textrm{-}\mathsf{Ins}} (\bfy)}\hspace{-.7ex}=\hspace{-0.5ex}\log(m)\hspace{-0.65ex} -\hspace{-0.65ex} \frac{1}{m}\left(r\hspace{-0.5ex}\ceil{\frac{m}{R}}\log\ceil{\frac{m}{R}} \hspace{-.3ex}+\hspace{-.3ex} (R\hspace{-.3ex}-\hspace{-.3ex}r)\floor{\frac{m}{R}}\log\floor{\frac{m}{R}}\right),
\eean
and the maximum is obtained only by balanced channel outputs. Furthermore, $\mathsf{max}^{\mathsf{In}}_{1\textrm{-}\mathsf{Ins}}(n)=\log(m)$ and it is attained only by channel outputs with $m$ runs.
\end{theorem}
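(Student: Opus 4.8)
The plan is to start from the closed form supplied by Lemma~\ref{lm:entropyIn-k-ins}, namely $\mathsf{H}^{\mathsf{In}}_{1\textrm{-}\mathsf{Ins}} (\bfy) = \log(m) - \frac{1}{m}\sum_{i=1}^{\rho(\bfy)} r_i \log r_i$, and to exploit that this value depends only on the multiset of run lengths. Consequently, maximizing the entropy over $\Sigma_{q,R}^{m}$ is equivalent to \emph{minimizing} $S(\bfy) \triangleq \sum_{i=1}^{R} r_i\log r_i$ over all compositions $(r_1,\ldots,r_R)$ of $m$ into $R$ positive integer parts. Since $x\log x$ is convex, Jensen's inequality already signals that the minimizer should be the most balanced composition; I would make this rigorous via the same exchange/smoothing argument used in the proof of Lemma~\ref{lm:min-entropy-R-runs}.

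Concretely, I would suppose a maximizer $\bfy$ is not balanced, so there exist runs with $r_s < \floor{\frac{m}{R}}$ and $r_\ell > \ceil{\frac{m}{R}}$, whence $r_\ell - r_s \ge 2$. Transferring one symbol yields $\bfy' \in \Sigma_{q,R}^{m}$ with $r_\ell' = r_\ell-1$, $r_s' = r_s+1$, and all other parts unchanged; this keeps the run count equal to $R$ (the symbol pattern is untouched) and keeps every length at least $1$ because $r_\ell \ge 2$. Reusing the increasing function $g(r) \triangleq (r+1)\log(r+1) - r\log r$ from the previous proof, a direct computation gives
$$\mathsf{H}^{\mathsf{In}}_{1\textrm{-}\mathsf{Ins}}(\bfy') - \mathsf{H}^{\mathsf{In}}_{1\textrm{-}\mathsf{Ins}}(\bfy) = \frac{g(r_\ell-1) - g(r_s)}{m} > 0,$$
the positivity following from $r_s < r_\ell-1$ and the monotonicity of $g$. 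This contradicts maximality, so every maximizer is balanced. As all balanced profiles share one run-length multiset, they attain a common value, obtained by plugging $r$ parts equal to $\ceil{\frac{m}{R}}$ and $R-r$ parts equal to $\floor{\frac{m}{R}}$ into the formula, which is exactly the claimed expression.

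For the global maximum across all $R$, no optimization is needed: for integer lengths $r_i \ge 1$ each summand obeys $r_i\log r_i \ge 0$ with equality iff $r_i = 1$, so $S(\bfy)\ge 0$ and $\mathsf{H}^{\mathsf{In}}_{1\textrm{-}\mathsf{Ins}}(\bfy) \le \log(m)$, with equality precisely when every run has length one, i.e.\ $\rho(\bfy) = m$. Hence $\mathsf{max}^{\mathsf{In}}_{1\textrm{-}\mathsf{Ins}}(n) = \log(m)$, attained only by the outputs with $m$ runs. The only delicate point is the smoothing step---checking the transfer stays inside $\Sigma_{q,R}^{m}$ and fixing the sign through the monotonicity of $g$---but this is a routine adaptation of Lemma~\ref{lm:min-entropy-R-runs}, so I anticipate no substantial obstacle; in fact the insertion case is cleaner than the deletion case, since the global bound follows from mere nonnegativity of $r\log r$ rather than a separate Jensen argument.
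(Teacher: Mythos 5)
Your proposal is correct and is exactly the adaptation the paper has in mind: the paper gives no proof of this theorem (it only remarks that ``similar techniques'' to the previous subsections apply), and your exchange/smoothing argument mirrors the proof of Lemma~\ref{lm:min-entropy-R-runs}, while your observation that $r_i\log r_i\ge 0$ with equality iff $r_i=1$ settles the global maximum even more directly than the Jensen argument used in the deletion case. One caveat, inherited from the paper's own proof of Lemma~\ref{lm:min-entropy-R-runs}: a non-balanced profile need not contain runs with $r_s<\floor{\frac{m}{R}}$ \emph{and} $r_\ell>\ceil{\frac{m}{R}}$ (e.g.\ the profile $(1,1,3)$ for $m=5$, $R=3$ has no run shorter than $\floor{\frac{5}{3}}=1$); the fix is to take $r_\ell$ a longest and $r_s$ a shortest run, for which non-balancedness does force $r_\ell-r_s\ge 2$, and then your computation $\mathsf{H}^{\mathsf{In}}_{1\textrm{-}\mathsf{Ins}}(\bfy')-\mathsf{H}^{\mathsf{In}}_{1\textrm{-}\mathsf{Ins}}(\bfy)=\frac{g(r_\ell-1)-g(r_s)}{m}>0$ goes through unchanged.
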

\begin{theorem} If $R\in [m]$, then 
\begin{align*}
\min_{\bfy \in \Sigma_{q,R}^{m}}{\mathsf{H}^{\mathsf{In}}_{1\textrm{-}\mathsf{Ins}} (\bfy)}=\log(m) -\frac{(m-R+1)\log(m-R+1)}{m}
\end{align*}
and the minimum is obtained only by skewed channel outputs. Furthermore, 
$\mathsf{min}^{\mathsf{In}}_{1\textrm{-}\mathsf{Ins}}(n)= 0$
and is attained only by channel outputs with a single run. 
\end{theorem}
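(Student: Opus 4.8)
The plan is to start from the closed form in \Lref{lm:entropyIn-k-ins}, which for $k=1$ gives $\mathsf{H}^{\mathsf{In}}_{1\textrm{-}\mathsf{Ins}}(\bfy) = \log(m) - \frac{1}{m}\sum_{i=1}^{R} r_i \log(r_i)$ for any $\bfy \in \Sigma_{q,R}^m$ with run length profile $(r_1,\ldots,r_R)$. Since $\log(m)$ and $m$ are fixed once $R$ and $m$ are fixed, minimizing the input entropy over $\Sigma_{q,R}^m$ is equivalent to \emph{maximizing} $S(\bfy) \triangleq \sum_{i=1}^R r_i\log(r_i)$ subject to $\sum_i r_i = m$ and each $r_i \geq 1$. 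Because $t \mapsto t\log t$ is convex, I expect the maximizer of $S$ to be the most concentrated (skewed) profile, which is exactly the mirror image of the minimization argument of \Lref{lem:min_entropy} for the deletion channel.

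First I would run the same local exchange as in \Lref{lem:min_entropy}. Suppose $\bfy$ minimizes the input entropy over $\Sigma_{q,R}^m$ and is not skewed; then at least two runs have length $\geq 2$, so there are indices $\ell \neq s$ with $r_s \leq r_\ell$ and $r_s \geq 2$. Form $\bfy' \in \Sigma_{q,R}^m$ by setting $r'_\ell = r_\ell + 1$, $r'_s = r_s - 1$, and $r'_k = r_k$ otherwise; since $r_s - 1 \geq 1$ this is a valid run length profile with exactly $R$ runs, realizable as a sequence for any $q \geq 2$. Using the increasing function $g(r) \triangleq (r+1)\log(r+1) - r\log(r)$ from the proof of \Lref{lm:min-entropy-R-runs}, a direct computation gives $\mathsf{H}^{\mathsf{In}}_{1\textrm{-}\mathsf{Ins}}(\bfy') - \mathsf{H}^{\mathsf{In}}_{1\textrm{-}\mathsf{Ins}}(\bfy) = -\frac{g(r_\ell) - g(r_s-1)}{m}$, which is strictly negative since $g$ is increasing and $r_\ell \geq r_s > r_s - 1$. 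This contradicts the minimality of $\bfy$, so the minimizer must be skewed; and because the entropy depends only on the multiset of run lengths, every skewed sequence attains the same value, so these are precisely the minimizers.

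Then I would evaluate $S$ on the skewed profile, which has $R-1$ runs of length $1$ (each contributing $1\cdot\log 1 = 0$) and one run of length $m - R + 1$, giving $S = (m-R+1)\log(m-R+1)$ and hence the claimed value $\log(m) - \frac{(m-R+1)\log(m-R+1)}{m}$. To obtain the global minimum I would view this as a function of $R$: writing $t = m-R+1$, the subtracted term $\frac{t\log t}{m}$ is strictly increasing in $t$ for $t \geq 1$, hence strictly increasing as $R$ decreases, so the per-$R$ minimum strictly decreases as $R$ decreases and is globally smallest at $R = 1$, where $t = m$ and the value is $\log(m) - \log(m) = 0$. Thus $\mathsf{min}^{\mathsf{In}}_{1\textrm{-}\mathsf{Ins}}(n) = 0$, attained exactly by the single-run (constant) outputs, consistent with the fact that a constant word of length $m = n+1$ has a unique length-$n$ preimage under a single insertion.

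I expect the only delicate points to be bookkeeping rather than conceptual: verifying that a non-skewed profile always supplies two runs of length at least two, and that the modified profile is realizable as a genuine sequence with exactly $R$ runs. Both are immediate for $q \geq 2$. The remainder is the same convexity/exchange template already used for the deletion channel, so the main effort is transcribing it with $r\log r$ in place of $(r+1)\log(r+1)$.
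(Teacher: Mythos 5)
Your proof is correct and is essentially the argument the paper intends: the paper omits an explicit proof of this theorem, stating only that it follows ``using similar techniques as in the previous subsections,'' and your proposal is exactly that adaptation --- the local exchange argument of \Lref{lem:min_entropy} with $r\log r$ replacing $(r+1)\log(r+1)$, followed by monotonicity in $R$ as in \Cref{cor:min1del}. All the details check out, including the strict increase of $g(r)=(r+1)\log(r+1)-r\log r$, the observation that a non-skewed profile has two runs of length at least two, and the strict monotonicity of $t\log t$ that pins the global minimum $0$ to single-run outputs only.
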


\subsection{The Double-Deletion Channel: Minimum Input Entropy}
This section studies the channel outputs $\bfy \in \Sigma_2^m$ that minimize the input entropy of the $2\textrm{-}\mathsf{Del}$ channel. Let $\cS \subseteq  I_k(\bfy)$ and define $\mathsf{W}_{\cS}(\bfy) \triangleq \sum_{\bfx \in \cS } \omega_{\bfy}(\bfx) \cdot \log \left( {\omega_{\bfy}(\bfx)}\right)$. From Lemma \ref{lm:entropyIn-k-del}, for $n=k+m$ we know that 
$$\mathsf{H}^{\mathsf{In}}_{k\textrm{-}\mathsf{Del}} (\bfy) = \log \left( \binom{n}{k} 2^k\right) -\frac{1}{\binom{n}{k}2^k}\sum_{\bfx \in I_k(\bfy) } \omega_{\bfy}(\bfx) \cdot \log \left( {\omega_{\bfy}(\bfx)}\right).$$
Therefore, $\underset{\bfy \in \Sigma_2^{m}}{\argmin}~\mathsf{H}^{\mathsf{In}}_{k\textrm{-}\mathsf{Del}} (\bfy) = \underset{\bfy \in \Sigma_2^{m}}{\argmax}~\mathsf{W}_{I_k(\bfy)}(\bfy)$.

Let $\bfy\in\Sigma_{2,R}^{m}$ with $\cR\cL(\bfy)=(r_1,r_2,\ldots,r_R)$. For $i\in[0,R-1]$, let $f_i$ denote the smallest index in $[i+1,R]$ such that $r_{f_i}>1$, and let $f_R=R$. If such an index does not exist then let $f_i=R$. Similarly, for $i \in [1,R]$, let $b_i$ denote the largest index in $[1,i]$ such that $r_{b_i}>1$, and let $b_0=1$. If such an index does not exist then let $b_i=1.$ 
 
\begin{lemma}\label{correction}
Let $\bfy\in\Sigma_{2,R}^{m}$, such that $\cR\cL(\bfy)=(r_1,\ldots,r_R)$ and let $\bfx \in I_2(\bfy)$. If $\cR\cL(\bfx)=(r_1,\ldots,r_i,1,1,r_{i+1},\ldots r_R)$, then
${\omega_{\bfy}(\bfx)} = 1+ \sum_{j = b_i}^{f_i} r_j.$
\end{lemma}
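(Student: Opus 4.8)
The plan is to compute $\omega_{\bfy}(\bfx)$ directly from its interpretation as a count of length-two deletions. Since $|\bfx|=|\bfy|+2$, every embedding of $\bfy$ into $\bfx$ is the complement of a unique $2$-subset of coordinates of $\bfx$ whose deletion produces $\bfy$, so $\omega_{\bfy}(\bfx)$ is exactly the number of such $2$-subsets. Writing $\sigma$ for the symbol of the $i$-th run of $\bfy$ and $\bar\sigma$ for its complement, the hypothesis $\cR\cL(\bfx)=(r_1,\dots,r_i,1,1,r_{i+1},\dots,r_R)$ says that $\bfx$ is obtained from $\bfy$ by inserting the alternating block $\bar\sigma\sigma$ at the junction between runs $i$ and $i+1$. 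Hence $\bfx$ contains exactly one more $\sigma$ and one more $\bar\sigma$ than $\bfy$, and any admissible $2$-subset must delete precisely one $\sigma$ and one $\bar\sigma$. This symbol-count constraint is the first reduction I would record, and it already accounts for the canonical deletion (removing the two inserted coordinates), which contributes the leading $1$ in the formula.

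Next I would localize the deletions to the window of $\bfx$ consisting of runs $b_i$ through $f_i$ of $\bfy$ (inclusive) together with the two inserted runs. The idea is that $\bfx$ and $\bfy$ share, verbatim, the prefix formed by runs $1,\dots,b_i-1$ and the suffix formed by runs $f_i+1,\dots,R$, and that the flanking runs $b_i$ and $f_i$ act as barriers: because $r_{b_i}>1$ and $r_{f_i}>1$ (in the generic case), a deletion landing strictly outside the window would shorten a run that must be preserved, and a single remaining deletion cannot simultaneously repair that change and remove the inserted block. I would make this precise with a longest-common-prefix/longest-common-suffix argument showing that every admissible $2$-subset lies inside the maximal stretch of $\bfx$ bounded by the two nearest runs of length exceeding one, which is exactly the span from run $b_i$ to run $f_i$; the boundary conventions $b_i=1$ and $f_i=R$ then cover the degenerate cases where no such flanking long run exists and the window reaches an end of $\bfy$ (so that the relevant anchor may itself have length one).

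With the deletions confined to this window, the problem collapses to a self-contained one: count the embeddings of $T=c^{\,r_{b_i}}\,w\,d^{\,r_{f_i}}$ into $S=c^{\,r_{b_i}}\,w'\,d^{\,r_{f_i}}$, where $c,d$ are the symbols of runs $b_i,f_i$, the word $w$ is the alternating string formed by the length-one runs strictly between $b_i$ and $f_i$ in $\bfy$, and $w'$ is the corresponding two-symbol-longer alternating string appearing in $\bfx$. Since the two anchors are blocks of equal symbols and $w,w'$ are alternating, I would evaluate this embedding number by conditioning on how far the matched prefix of $T$ reaches into $S$ before the first unmatched coordinate (equivalently, on the images of the two alternating coordinates of $S$ absent from $T$); each case contributes a product of binomial coefficients, and the resulting sum evaluates to $r_{b_i}+r_{f_i}+(f_i-b_i)$. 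Because the length-one runs strictly between $b_i$ and $f_i$ number $f_i-b_i-1$ and each contributes $r_j=1$, this is exactly $1+\bigl(r_{b_i}+r_{f_i}+\sum_{j=b_i+1}^{f_i-1}r_j\bigr)=1+\sum_{j=b_i}^{f_i}r_j$. I would first verify the base case $b_i=i,\ f_i=i+1$, where the count is simply $1+r_i+r_{i+1}$, as a sanity check on the reduction.

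The main obstacle I anticipate is the bookkeeping of the \emph{cascading merges}: deleting a symbol from a length-one run forces its two equal-symbol neighbours to merge, and when several length-one runs are consecutive these merges propagate until they reach an anchor of length exceeding one, which is precisely why the nearest long runs $b_i$ and $f_i$, rather than merely $i$ and $i+1$, enter the formula. Making the localization fully rigorous in the boundary cases $b_i=1$ and $f_i=R$ (where an anchor can have length one and there is no material beyond it to serve as a barrier), and checking that the final count is uniform across the two parities of $f_i-b_i$ (anchors carrying equal versus opposite symbols), are the two points that require genuine care; the small-case computations above indicate that both behave uniformly, so I expect the identity $\omega_{\bfy}(\bfx)=1+\sum_{j=b_i}^{f_i}r_j$ to hold in every case.
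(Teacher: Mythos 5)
Your proposal is correct and takes essentially the same route as the paper's proof: both arguments localize the two deletions to the window of $\bfx$ spanned by runs $b_i$ through $f_i$ (the nearest runs of length exceeding one acting as barriers, with the conventions $b_i=1$, $f_i=R$ covering the ends), argue that no valid deletion pair can involve a position outside this window, and then count the embeddings inside the window to obtain $1+\sum_{j=b_i}^{f_i} r_j$. The only cosmetic differences are that the paper cites \cite{ABC18} for the generic case $r_i,r_{i+1}>1$ and states the window count as an observation ($\omega_{\bfz}(\bfz')=|\bfz|+1$ for the window pair $\bfz,\bfz'$), whereas you rederive it uniformly by a case analysis; the localization step is left at a comparable level of rigor in both.
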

\begin{proof}
In \cite{ABC18}, ${\omega_{\bfy}(\bfx)}$ was characterized as $r_{i}+r_{i+1}+1$. In the case where $r_i=1$ or  $r_{i+1}=1$ the analysis is a bit harder and the characterization is slightly different as follows. Let $\bfz$ denote the subsequence of $\bfy$ from the $b_i$-th run to the $f_i$-th run. Note that $\bfz$ is an alternating segment of length $\sum_{j = b_i}^{f_i} r_j$. Since $\bfx$ is obtained from $\bfy$ by adding two runs of length one after the $i$-th run, we can consider the subsequence $\bfz'$ of $\bfx$ which is the sequence $\bfz$ with these additional two runs. 
It can be verified that deleting any two bits in $\bfx$ that do not belong to $\bfz'$  will result with a sequence which is different than $\bfy$. Therefore, $\omega_{\bfy}(\bfx) = \omega_{\bfz}(\bfz').$
Observe that $\omega_{\bfz}(\bfz') = |\bfz| + 1$ and hence, $\omega_{\bfy}(\bfx) = 1+ \sum_{j = b}^{f} r_j $.
\end{proof}

Let $\alpha \in \Sigma_2, \bfx \in \Sigma_2^n$, and denote by $\alpha \circ \bfx$ their concatenation. 
By abuse of notation, given a set of sequences $\cS$, we let $\alpha \circ \cS \triangleq \{ \alpha\circ \bfx : \bfx \in \cS \}$. Next, we show how $\omega_\bfy(\bfx)$ is affected when a bit $\alpha$ is appended at the beginning of both $\bfx$ and $\bfy$.
 
\begin{lemma}\label{alpha}
Let $\bfy \in \Sigma_2^{m}$ and $\bfx \in \Sigma_2^{n}$ with $m=n-k$. For $\bfy' = \alpha\circ\bfy$ and $\bfx' = \alpha \circ \bfx$ it holds that   
$$\omega_{\bfy'}(\bfx') = \omega_{\bfy}(\bfx) + \sum_{i=1}^{k}\omega_{\bfy}(\bfx_{[i+1,n]})\cdot \mathbb{I}_{\alpha = x_i},$$
where $\mathbb{I}$ is the indicator function. 
\end{lemma}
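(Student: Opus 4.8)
The plan is to count the embeddings of $\bfy'=\alpha\circ\bfy$ into $\bfx'=\alpha\circ\bfx$ by partitioning them according to which position of $\bfx'$ is used to match the leading symbol of $\bfy'$. To set up the bookkeeping cleanly, I would index $\bfx'$ by $\{0,1,\ldots,n\}$ so that position $0$ carries the prepended symbol $\alpha$ and position $j\ge 1$ carries $x_j$, and likewise index $\bfy'$ by $\{0,1,\ldots,m\}$ with $y'_0=\alpha$. An embedding of $\bfy'$ into $\bfx'$ is then an increasing tuple $(j_0,j_1,\ldots,j_m)$ with $x'_{j_0}=\alpha$ and $x'_{j_t}=y_t$ for $t\in[m]$. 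Since $j_0$ is part of the data of each embedding, the set of all embeddings splits into disjoint classes indexed by the value of $j_0$, and there is no double counting.

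First I would treat the class $j_0=0$: here the leading $\alpha$ is matched to the prepended symbol of $\bfx'$, which always succeeds, and the remaining indices $j_1<\cdots<j_m$ range over $\{1,\ldots,n\}$ subject to $x_{j_t}=y_t$. These are exactly the embeddings of $\bfy$ into $\bfx$, so this class contributes $\omega_{\bfy}(\bfx)$. Next I would handle the classes $j_0=i$ with $i\ge 1$: such a class is nonempty only when $x_i=\alpha$, which is precisely what the factor $\mathbb{I}_{\alpha=x_i}$ records, and in that case the remaining indices lie in $\{i+1,\ldots,n\}$ and realize $\bfy$ as a subsequence of the suffix $\bfx_{[i+1,n]}$. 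Hence this class contributes $\omega_{\bfy}(\bfx_{[i+1,n]})\cdot\mathbb{I}_{\alpha=x_i}$, and summing over $i$ gives $\omega_{\bfy'}(\bfx')=\omega_{\bfy}(\bfx)+\sum_{i=1}^{n}\omega_{\bfy}(\bfx_{[i+1,n]})\cdot\mathbb{I}_{\alpha=x_i}$.

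The only point that needs care — and the reason the stated sum terminates at $k$ rather than $n$ — is a length argument. The suffix $\bfx_{[i+1,n]}$ has length $n-i$, while $\bfy$ has length $m=n-k$, so $\bfy$ can embed into it only if $n-i\ge n-k$, that is $i\le k$. Consequently $\omega_{\bfy}(\bfx_{[i+1,n]})=0$ for every $i>k$, and all those terms vanish, leaving the sum $\sum_{i=1}^{k}$ exactly as claimed. I expect this truncation to be the sole subtlety; once the indexing convention is fixed, the partition by $j_0$ is routine and each class is evaluated directly from the definition of the embedding number. It is worth noting that this identity is essentially the standard first-symbol recursion for counting subsequence occurrences, specialized to prepending a common symbol to both strings.
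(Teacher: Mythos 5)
Your proof is correct and is in substance the same as the paper's: the paper states the first-symbol recursion $\omega_{\bfy}(\bfx) = \omega_{\bfy_{[2,m]}}(\bfx_{[2,n]})\cdot \mathbb{I}_{y_1 = x_1} + \omega_{\bfy}(\bfx_{[2,n]})$ and unrolls it, which is precisely your partition of the embeddings of $\alpha\circ\bfy$ in $\alpha\circ\bfx$ according to the position matched to the leading $\alpha$, with the same observation that terms beyond $i=k$ vanish because $\bfx_{[i+1,n]}$ is then shorter than $\bfy$. Your direct counting presentation is a clean, equivalent rendering of the paper's recursive argument, with no gaps.
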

\begin{proof}
It can be verified that
\begin{align}\label{DP-EW}
 {\omega_{\bfy}(\bfx)} &= \begin{cases} 
     \omega_{\bfy_{[2,m]}}(\bfx_{[2,n]}) + \omega_{\bfy}(\bfx_{[2,n]}) & y_1 = x_{1} \\
     \omega_{\bfy}(\bfx_{[2,n]})& y_1 \neq x_{1}. \\ 
  \end{cases}  
\end{align}
Therefore, $\omega_{\bfy'}(\bfx') = \omega_{\bfy}(\bfx) + \omega_{\bfy'}(\bfx)$. Using \eqref{DP-EW} recursively, it follows that $\omega_{\bfy'}(\bfx) = \sum_{i=1}^{k}\omega_{\bfy}(\bfx_{[i+1,n]})\cdot\mathbb{I}_{\alpha = x_i}$.
\end{proof}

For $\bfy \in \Sigma_2^{m}$ with $\cR\cL(\bfy)=(r_1,\ldots, r_R)$, let $x_\beta^{\bfy}, x_\gamma^{\bfy}, x_\delta^{\bfy} \in I_2(\bfy)$ be such that $\cR\cL(\bfx_\beta^\bfy)\triangleq(1,1,r_1,r_2,\ldots, r_R)$, $\cR\cL(\bfx_\gamma^\bfy)\triangleq(2,r_1,r_2,\ldots, r_R)$ and $\cR\cL(\bfx_\delta^\bfy)\triangleq(1,r_1+1,r_2,\ldots, r_R)$. The next two corollaries follow from Lemma~\ref{correction} and Lemma~\ref{alpha}.
\begin{corollary}\label{exc}
 Let $\bfy \in \Sigma_2^{m}$ it holds that: 
 \begin{enumerate}
 \item $\omega_{\bfy}(\bfx_\beta^{\bfy}) \hspace{-.2ex}=\hspace{-.2ex} 1\hspace{-.2ex}+\hspace{-.2ex}\sum_{j = 1}^{f_0} r_j$, which is maximal for $f_0 = \rho(\bfy)$.
 \item $\omega_{\bfy}(\bfx_\gamma^{\bfy}) \hspace{-.2ex}=\hspace{-.2ex} 1$. 
 \item $\omega_{\bfy}(\bfx_\delta^{\bfy}) \hspace{-.2ex}=\hspace{-.2ex} r_1+1$,  which is maximal for $\rho(\bfy) = 1$ and minimal for $\rho(\bfy) = m$.
 \end{enumerate}
\end{corollary}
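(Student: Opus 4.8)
The plan is to handle the three supersequences one at a time, in each case first recovering the concrete binary word hiding behind the prescribed run-length profile and then applying the relevant counting lemma. Write $a \triangleq y_1$, so that $\bfy = a^{r_1}\bar a^{r_2}a^{r_3}\cdots$ with runs alternating between $a$ and $\bar a$. Because each of $\bfx_\beta^\bfy,\bfx_\gamma^\bfy,\bfx_\delta^\bfy$ is required to lie in $I_2(\bfy)$, the symbols of the prepended runs are forced: to realize the profiles $(1,1,r_1,\dots)$, $(2,r_1,\dots)$, $(1,r_1+1,\dots)$ while still embedding $\bfy$, one checks that $\bfx_\beta^\bfy = a\,\bar a\circ\bfy$, that $\bfx_\gamma^\bfy = \bar a\,\bar a\circ\bfy$, and that $\bfx_\delta^\bfy = \bar a\,a\circ\bfy=\bar a\circ(a\circ\bfy)$, the last being $\bfy$ with its first run prolonged by one symbol and an extra opposite symbol placed in front.

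For part (1) I would specialize Lemma~\ref{correction} to $i=0$: the profile $(1,1,r_1,\dots,r_R)$ is precisely two length-one runs inserted ``after the $0$-th run,'' and the definitions give $b_0=1$ together with $f_0$ equal to the least index with $r_{f_0}>1$ (or $R$). Hence $\omega_{\bfy}(\bfx_\beta^\bfy)=1+\sum_{j=1}^{f_0}r_j$. The maximality assertion is then immediate from $\sum_{j=1}^{f_0}r_j\le\sum_{j=1}^{R}r_j=m$, with equality exactly when $f_0=R=\rho(\bfy)$.

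For parts (2) and (3) I would lean on the recursion \eqref{DP-EW} from the proof of Lemma~\ref{alpha}: when the leading symbol of the supersequence differs from $y_1$ it cannot be the image of $y_1$ in any embedding, so it may be discarded without changing the count, i.e. $\omega_{\bfy}(\bar a\circ\bfw)=\omega_{\bfy}(\bfw)$. For $\gamma$, stripping the two leading $\bar a$'s leaves $\bfy$ itself, and $\omega_{\bfy}(\bfy)=1$. For $\delta$, stripping the single leading $\bar a$ leaves $a\circ\bfy$; since $a=y_1$ this prolongs the first run, and Lemma~\ref{alpha} applied with $\alpha=a$ (using $a\circ\bfy_{[2,m]}=\bfy$, together with $\omega_{\bfy_{[2,m]}}(\bfy)=r_1$, the count of single-deletion embeddings into the first run) gives $\omega_{\bfy}(a\circ\bfy)=r_1+1$. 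The extremal claims follow at once: $r_1+1$ is largest, equal to $m+1$, exactly when $\bfy$ is a single run ($\rho(\bfy)=1$), and smallest, equal to $2$, when every run has length one ($\rho(\bfy)=m$).

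Once the two lemmas are in hand these steps are essentially bookkeeping, and the only genuinely delicate point is the very first one: correctly reading off the concrete words from the run-length profiles and confirming that they are the \emph{unique} members of $I_2(\bfy)$ with those profiles. In particular one must verify that each newly created run carries the symbol opposite to $y_1$ --- this is exactly what legitimizes the ``strip the leading symbol'' move for $\gamma$ and $\delta$ and the $i=0$ instantiation of Lemma~\ref{correction} for $\beta$. Beyond this symbol-tracking I do not anticipate any analytic obstacle.
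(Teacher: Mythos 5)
Your proposal is correct and follows essentially the same route as the paper, which derives this corollary directly from Lemma~\ref{correction} (the $i=0$ instantiation for $\bfx_\beta^{\bfy}$) and Lemma~\ref{alpha} together with the recursion \eqref{DP-EW} (the ``strip the leading opposite symbol'' step for $\bfx_\gamma^{\bfy}$ and $\bfx_\delta^{\bfy}$). The paper leaves these instantiations implicit, so your explicit symbol-tracking of the words $a\bar a\circ\bfy$, $\bar a\bar a\circ\bfy$, $\bar a a\circ\bfy$ is exactly the bookkeeping the authors omit.
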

\begin{corollary}\label{app1}
Let $\bfy \in \Sigma_2^{m}$, $\bfx \in \Sigma_2^{m+2}$, and $\bfy' = \alpha\circ\bfy$, $\bfx' = \alpha \circ \bfx$. If $ \alpha \ne y_{1}$ or $y_1 \ne x_{1}$, then 
$$ {\omega_{\bfy'}(\bfx')} = \begin{cases} 
     {\omega_{\bfy}(\bfx)} + \mathbb{I}_{\bfx = \bfx_\beta^\bfy} & \alpha \neq y_{1}, y_1 = x_{1} \\
     2 \cdot {\omega_{\bfy}(\bfx)} + \mathbb{I}_{\bfx = \bfx_\gamma^\bfy} & \alpha \neq y_{1}, y_1 \neq x_{1} \\ 
     {\omega_{\bfy}(\bfx)} + \mathbb{I}_{\bfx = \bfx_\delta^\bfy} & \alpha = y_{1}, y_1 \neq x_{1}.
  \end{cases}$$
\end{corollary}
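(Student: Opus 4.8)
The plan is to specialize Lemma~\ref{alpha} to $k=2$ and then evaluate the two correction terms case by case. Since $\bfx$ has length $n=m+2$, Lemma~\ref{alpha} gives the expansion $\omega_{\bfy'}(\bfx') = \omega_{\bfy}(\bfx) + \omega_{\bfy}(\bfx_{[2,n]})\,\mathbb{I}_{\alpha=x_1} + \omega_{\bfy}(\bfx_{[3,n]})\,\mathbb{I}_{\alpha=x_2}$. The whole argument rests on two reductions of these terms. First, $\bfx_{[3,n]}$ and $\bfy$ both have length $m$, so $\omega_{\bfy}(\bfx_{[3,n]}) = \mathbb{I}_{\bfx_{[3,n]}=\bfy}$; this is what ultimately produces the $\{0,1\}$-valued indicator on the right-hand side of each case. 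Second, the recursion \eqref{DP-EW} shows that when $y_1\neq x_1$ one has $\omega_{\bfy}(\bfx_{[2,n]})=\omega_{\bfy}(\bfx)$, which converts the leading $\omega_{\bfy}(\bfx)$ into the factor $2$ that appears precisely in the middle case.

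Before the case split I would record the symbol-level form of $\bfx_\beta^\bfy,\bfx_\gamma^\bfy,\bfx_\delta^\bfy$ read off from their run-length profiles over the binary alphabet. Writing $\overline{y_1}$ for the complement of $y_1$, the profiles $(1,1,r_1,\ldots,r_R)$, $(2,r_1,\ldots,r_R)$ and $(1,r_1+1,r_2,\ldots,r_R)$ are realized uniquely as $\bfx_\beta^\bfy=y_1\,\overline{y_1}\,\bfy$, $\bfx_\gamma^\bfy=\overline{y_1}\,\overline{y_1}\,\bfy$ and $\bfx_\delta^\bfy=\overline{y_1}\,y_1\,\bfy$, because over $\Sigma_2$ the value of each prepended run is forced by the requirement that consecutive runs differ. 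These three identities are the bridge between the indicators in the expansion and the indicators in the statement.

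The case analysis is then short. When $\alpha\neq y_1$ and $y_1=x_1$ we have $\alpha=\overline{y_1}\neq x_1$, so the middle term drops and the surviving correction $\mathbb{I}_{\bfx_{[3,n]}=\bfy}\,\mathbb{I}_{x_2=\overline{y_1}}$ equals $\mathbb{I}_{\bfx=\bfx_\beta^\bfy}$ given $x_1=y_1$. When $\alpha\neq y_1$ and $y_1\neq x_1$ we have $\alpha=x_1=\overline{y_1}$, so by the second reduction the middle term is $\omega_{\bfy}(\bfx)$ and the remaining correction $\mathbb{I}_{\bfx_{[3,n]}=\bfy}\,\mathbb{I}_{x_2=\overline{y_1}}$ equals $\mathbb{I}_{\bfx=\bfx_\gamma^\bfy}$ given $x_1=\overline{y_1}$. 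When $\alpha=y_1$ and $y_1\neq x_1$ the middle term again drops and $\mathbb{I}_{\bfx_{[3,n]}=\bfy}\,\mathbb{I}_{x_2=y_1}=\mathbb{I}_{\bfx=\bfx_\delta^\bfy}$ given $x_1=\overline{y_1}$. The fourth combination $\alpha=y_1=x_1$ is exactly the one excluded by the hypothesis, so these three cases are exhaustive.

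I expect the only delicate point to be the indicator matching itself: in each case one must check that, once $x_1$ is pinned down by the case hypothesis, the conjunction of ``$x_2$ equals the prescribed symbol'' and ``$\bfx_{[3,n]}=\bfy$'' is equivalent to the full word equality $\bfx=\bfx_\bullet^\bfy$. This equivalence is precisely where the explicit forms of $\bfx_\beta^\bfy,\bfx_\gamma^\bfy,\bfx_\delta^\bfy$ and the binary alphabet are used, since over $\Sigma_2$ the phrase ``different from $y_1$'' determines a unique symbol. Everything else is direct substitution into the expansion from Lemma~\ref{alpha}, so no further estimates are needed.
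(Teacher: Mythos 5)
Your proof is correct. The paper offers no written argument for this corollary---it only asserts that it ``follows from Lemma~\ref{correction} and Lemma~\ref{alpha}''---and your write-up is precisely the natural fleshing-out of that assertion: specialize Lemma~\ref{alpha} to $k=2$, discard or keep each correction term according to the case hypothesis, and match the surviving indicator against the explicit binary realizations $\bfx_\beta^\bfy=y_1\circ\overline{y_1}\circ\bfy$, $\bfx_\gamma^\bfy=\overline{y_1}\circ\overline{y_1}\circ\bfy$, $\bfx_\delta^\bfy=\overline{y_1}\circ y_1\circ\bfy$. The one genuine (if minor) divergence is that you never invoke Lemma~\ref{correction} at all: its role is absorbed by your observation that $\omega_{\bfy}(\bfx_{[3,n]})=\mathbb{I}_{\bfx_{[3,n]}=\bfy}$ for equal-length words together with the $y_1\neq x_1$ branch of \eqref{DP-EW}, which makes the derivation self-contained modulo Lemma~\ref{alpha} and is arguably cleaner. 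Your final paragraph also isolates the only delicate point correctly---that once $x_1$ is pinned by the case hypothesis, the conjunction of the conditions on $x_2$ and $\bfx_{[3,n]}$ is equivalent to the full equality $\bfx=\bfx_\bullet^\bfy$---and in doing so it fixes the right reading of the definitions of $\bfx_\beta^\bfy,\bfx_\gamma^\bfy,\bfx_\delta^\bfy$ (the supersequence whose length-$m$ suffix is exactly $\bfy$), which the paper leaves implicit.
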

Let $\sigma^\ell \in \Sigma_{2}^\ell$ be a constant word of length $\ell$, i.e., $\rho(\sigma^\ell) = 1$. The next lemma shows that $\bfy = \sigma^m$ minimizes $\mathsf{H}^{\mathsf{In}}_{2\textrm{-}\mathsf{Del}} (\bfy)$. 
\begin{lemma}\label{min-2del}
$\mathsf{H}^{\mathsf{In}}_{2\textrm{-}\mathsf{Del}} (\bfy)$ is minimized only by $\sigma^m$ and 
 
$$
\mathsf{min}^{\mathsf{In}}_{2\textrm{-}\mathsf{Del}}(n) =\mathsf{H}^{\mathsf{In}}_{2\textrm{-}\mathsf{Del}} (\sigma^m) = 2 + \dfrac{3}{4}\log\binom{m}{2}-\dfrac{1}{2}\log(m+1).
$$
\end{lemma}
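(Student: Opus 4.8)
The plan is to exploit the reduction already recorded just before the statement: since $\mathsf{H}^{\mathsf{In}}_{2\textrm{-}\mathsf{Del}}(\bfy) = \log(4\binom{n}{2}) - \frac{1}{4\binom{n}{2}}\mathsf{W}_{I_2(\bfy)}(\bfy)$ with $n=m+2$, we have $\argmin_{\bfy}\mathsf{H}^{\mathsf{In}}_{2\textrm{-}\mathsf{Del}}(\bfy)=\argmax_{\bfy}\mathsf{W}_{I_2(\bfy)}(\bfy)$. Writing $\mathsf{W}(\bfy):=\mathsf{W}_{I_2(\bfy)}(\bfy)$, it therefore suffices to prove that the constant word $\sigma^m$ is the \emph{unique} maximizer of $\mathsf{W}$ over $\Sigma_2^m$, and then to evaluate $\mathsf{W}(\sigma^m)$ and substitute into \Lref{lm:entropyIn-k-del}. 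I would prove the maximization claim by induction on $m$ via a ``prepend one bit'' decomposition built from \Lref{alpha} and \Cref{app1}.

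The decomposition is the heart of the argument. Fix $\bfy'=\alpha\circ\bfy$ with $\bfy\in\Sigma_2^{m}$, and partition $I_2(\bfy')$ according to the leading symbol of each supersequence $\bfx'$. Every $\bfx'$ that begins with $\bar\alpha$ can be written $\bfx'=\bar\alpha\circ\bfx$ with $\bfx\in I_1(\bfy')$, and the leading $\bar\alpha$ is useless for embedding $\bfy'$, so $\omega_{\bfy'}(\bfx')=\omega_{\bfy'}(\bfx)$; these terms sum to exactly the single-deletion weight $\mathsf{V}(\bfy'):=\sum_{i=1}^{\rho(\bfy')}(r_i+1)\log(r_i+1)$, read off the run-length profile as in the proof of \Cref{cor:entropy-single-deletion}. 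Every $\bfx'$ that begins with $\alpha$ can be written $\bfx'=\alpha\circ\bfx$ with $\bfx\in I_2(\bfy)$, and its weight $\Omega(\bfx):=\omega_{\bfy'}(\alpha\circ\bfx)$ is governed by \Lref{alpha}, which \Cref{app1} simplifies to one of $\omega_{\bfy}(\bfx)$, $2\,\omega_{\bfy}(\bfx)$, or $\omega_{\bfy}(\bfx)$ plus at most one indicator pointing to the special supersequences $\bfx_\beta^\bfy,\bfx_\gamma^\bfy,\bfx_\delta^\bfy$ of \Cref{exc}. Thus $\mathsf{W}(\alpha\circ\bfy)=\mathsf{V}(\alpha\circ\bfy)+\sum_{\bfx\in I_2(\bfy)}\Omega(\bfx)\log\Omega(\bfx)$, which cleanly separates a fully-understood single-deletion part from a two-deletion part indexed by the shorter word $\bfy$.

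The inductive step then compares extending the leading run ($\alpha=y_1$) against starting a new run ($\alpha=\bar y_1$). On the $\mathsf{V}$-part, extending raises $\mathsf{V}$ by $g(r_1+1)$, while a fresh length-one run contributes only $2\log 2 = g(1)$, where $g(r)=(r+1)\log(r+1)-r\log(r)$ is the increasing function from the proof of \Lref{lm:min-entropy-R-runs}; since $r_1\ge 1$ we get $g(r_1+1)\ge g(2)>g(1)$, so the single-deletion part strictly prefers extending. For the two-deletion part I would use the induction hypothesis (that $\sigma^{m-1}$ uniquely maximizes $\mathsf{W}$ on $\Sigma_2^{m-1}$, and analogous control of $\sum_\bfx \omega_\bfy(\bfx)\log\omega_\bfy(\bfx)$) together with convexity of $t\log t$ to absorb both the factor-$2$ appearing in Case~2 of \Cref{app1} and the $+\mathbb{I}$ corrections, forcing equality only when $\bfy=\sigma^m$. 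I expect the \textbf{main obstacle} to be precisely the case $\alpha=y_1=x_1$ that \Cref{app1} leaves uncovered: there $\Omega(\bfx)$ does not collapse to a single multiple of $\omega_{\bfy}(\bfx)$ but unfolds recursively, and one must show that the contribution of $\sum\Omega\log\Omega$ in the run-extending configuration still dominates the ``doubled'' contribution that the new-run configuration gains in Case~2, a global convexity estimate rather than a term-by-term comparison.

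Finally, the base case and the explicit value both follow from a direct count of $I_2(\sigma^m)$: the length-$(m+2)$ supersequences of a constant word are exactly those containing at most two of the opposite symbol, giving a single word of embedding number $\binom{m+2}{2}$, exactly $m+2$ words of embedding number $m+1$, and $\binom{m+2}{2}$ words of embedding number $1$. Hence $\mathsf{W}(\sigma^m)=\binom{m+2}{2}\log\binom{m+2}{2}+(m+2)(m+1)\log(m+1)$, and substituting this into \Lref{lm:entropyIn-k-del} with $n=m+2$, $q=2$, $k=2$ produces the claimed closed-form minimum $\mathsf{min}^{\mathsf{In}}_{2\textrm{-}\mathsf{Del}}(n)=\mathsf{H}^{\mathsf{In}}_{2\textrm{-}\mathsf{Del}}(\sigma^m)$, with uniqueness inherited from the strict inequalities in the inductive step.
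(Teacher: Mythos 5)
Your outline reproduces the paper's own strategy almost step for step: reduce minimizing $\mathsf{H}^{\mathsf{In}}_{2\textrm{-}\mathsf{Del}}$ to maximizing $\mathsf{W}_{I_2(\bfy)}(\bfy)$, induct on the word length, and split the supersequences of $\alpha\circ\bfy$ by their leading bit, so that the part beginning with $\bar\alpha$ collapses via \eqref{DP-EW} and Corollary~\ref{cor:min1del} to the single-deletion weight, while the part beginning with $\alpha$ is controlled by Lemma~\ref{alpha} and Corollaries~\ref{exc} and~\ref{app1}; this is exactly the paper's decomposition $\mathsf{W}_{I_2(\bfy)}(\bfy)=\mathsf{W}_{y_1\circ I_2(\bfy_{[2,\ell]})}(\bfy)+\mathsf{W}_{\overline{y}_1\circ I_1(\bfy)}(\bfy)$ with its two cases $y_1\neq y_2$ and $y_1=y_2$. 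The problem is the step you yourself flag as ``the main obstacle'': the configuration $\alpha=y_1=x_1$, which Corollary~\ref{app1} deliberately leaves uncovered. There $\Omega(\bfx)=\omega_{\alpha\circ\bfy}(\alpha\circ\bfx)$ is not a fixed multiple of $\omega_{\bfy}(\bfx)$ plus an indicator; it changes in a way that depends on the entire run-length profile. Asserting that the induction hypothesis ``together with convexity of $t\log t$'' absorbs the factor-$2$ terms and the indicator corrections is not an argument, and convexity alone cannot settle the competition, because the run-extending word and the new-run word gain on \emph{different} subsets of $I_2(\bfy)$, by amounts that both grow with the run lengths. This is precisely where the paper does its real work: for $y_1=y_2=x_1$ it writes the increment $\cW(\bfy_{[2,\ell]})=\mathsf{W}_{y_1\circ y_2\circ I_2(\bfy_{[3:\ell]})}(\bfy)-\mathsf{W}_{y_2\circ I_2(\bfy_{[3:\ell]})}(\bfy_{[2,\ell]})$ as an explicit function of $(r_1,\dots,r_R)$, using the embedding-number formula of \cite{ABC18} as corrected by Lemma~\ref{correction}, and then devotes the entire Appendix to an exchange argument: assume the profile is not constant, move one unit of length from some run $r_j$ into $r_1$, and show that the resulting change, including the binomial terms, is bounded below by a quantity that is positive (a monotonicity computation in $r_1$), contradicting maximality. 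Without this, or an equivalent quantitative lemma, your inductive step does not close; as written, your proposal is an accurate outline of the paper's proof with its hardest component left unproven.

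A separate remark on the closed-form value: your count of $I_2(\sigma^m)$ (one word of embedding number $\binom{m+2}{2}$, exactly $m+2$ words of embedding number $m+1$, and $\binom{m+2}{2}$ words of embedding number $1$) is correct, but substituting it into Lemma~\ref{lm:entropyIn-k-del} with $n=m+2$, $q=2$, $k=2$ yields $2+\frac{3}{4}\log\binom{m+2}{2}-\frac{1}{2}\log(m+1)$, i.e.\ $\binom{n}{2}$ inside the logarithm where the statement prints $\binom{m}{2}$. The statement's formula appears to carry a typo (the paper's proof never actually evaluates the constant), so your arithmetic is the consistent version --- but then you should not claim that it ``produces the claimed closed-form minimum'' verbatim; it produces the corrected one.
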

\begin{proof} We prove the lemma by induction on $m$. Let
\begin{align}\label{ind1}
    \bfg_m \triangleq \underset{\bfy \in \Sigma_2^{m}}{\argmax}~ \mathsf{W}_{y_1\circ I_2(\bfy_{[2:m]})}(\bfy).
\end{align}
For the base case it can be verified that $\bfg_2 = \{00,11 \}$. For $\ell<m$, assume that $\bfg_{\ell-1} = \sigma^{\ell-1}$. 
From \eqref{DP-EW},  ${\omega_{\bfy}(\overline{y}_1\circ\bfx)} = {\omega_{\bfy}(\bfx)},$ and by Corollary~\ref{cor:min1del} it can be deduced that
\begin{align}\label{single-del}
    \underset{\bfy \in \Sigma_2^{\ell-1}}{\argmax}~\mathsf{W}_{\overline{y}_1\circ I_1(\bfy)}(\bfy)=\sigma^{\ell-1}.
\end{align}
We will now show that $\bfg_{\ell} = \sigma^{\ell}$. Let $\bfy \in \Sigma_2^{\ell}, \cR\cL(\bfy_{[2,\ell]})\triangleq (r_1,r_2,\ldots,r_R)$, and $\bfx \in  I_2(\bfy_{[2,\ell]})$. Consider the next two cases:
\\
\textbf{Case I} - $y_1 \neq y_2$: From Corollary \ref{app1}, we know that 
$$ {\omega_{\bfy}(y_1\circ\bfx)} = \begin{cases} 
     {\omega_{\bfy_{[2,\ell]}}(\bfx)} + \mathbb{I}_{\bfx = \bfx_\beta^{\bfy_{[2,\ell]}}}  &  y_{2} = x_{1} \\
     2\cdot {\omega_{\bfy_{[2,\ell]}}(\bfx)} + \mathbb{I}_{\bfx = \bfx_\gamma^{\bfy_{[2,\ell]}}}&  y_{2} \neq x_{1}.
\end{cases}$$
Since $\bfg_{\ell-1} = \sigma^{\ell-1}$ and by Corollary \ref{exc}, it can be verified that 
$$\underset{y_1\neq y_2, \bfy_{[2:\ell]}\in \Sigma_2^{\ell-1}}{\argmax}~\mathsf{W}_{y_1\circ y_2 \circ I_2(\bfy_{[3,\ell]})}(\bfy) = \overline{\sigma^\ell_1}\circ\sigma^{\ell-1}.$$
Similarly, using Corollary \ref{exc} and \eqref{single-del}, it can be verified that
$$\underset{y_1\neq y_2, \bfy_{[2:\ell]}\in \Sigma_2^{\ell-1}}{\argmax}~\mathsf{W}_{y_1\circ \overline{y_2}\circ I_1(\bfy_{[2,\ell]})}(\bfy) = \overline{\sigma^\ell_1}\circ\sigma^{\ell-1}.$$
Therefore, 
$$\underset{y_1\neq y_2, \bfy_{[2:\ell]}\in \Sigma_2^{\ell-1}}{\argmax}~\mathsf{W}_{y_1\circ I_2(\bfy_{[2,\ell]})}(\bfy) = \overline{\sigma^\ell_1}\circ\sigma^{\ell-1}.$$
\textbf{Case II} - $y_1 = y_2$: By Corollary \ref{app1}, if  $y_{2} \neq x_{1}$, then $\omega_{\bfy}(y_1\circ\bfx) = \omega_{\bfy_{[2,\ell]}}(\bfx) + \mathbb{I}_{\bfx = \bfx_\delta^{\bfy_{[2,\ell]}}}$. As in the previous case, using Corollary \ref{exc} and \eqref{ind2}, it can be similarly verified that 
$$\underset{y_1 = y_2, \bfy_{[2:\ell]}\in \Sigma_2^{\ell-1}}{\argmax}~\mathsf{W}_{y_1\circ \overline{y}_2\circ I_1(\bfy_{[2,\ell]})}(\bfy) = \sigma^\ell.$$
Otherwise, if $y_{2} = x_{1}$, we define $\omega_{\bfy_{[2,\ell]}}(\bfx) \triangleq f(r_1,r_2,\ldots,r_R,\bfx)$, then $\omega_{\bfy}(y_1\circ\bfx) = f(r_1+1,r_2,\ldots,r_R,\bfx)$. Let
$$\cW(\bfy_{[2,\ell]}) \triangleq \mathsf{W}_{y_1\circ y_2\circ I_2(\bfy_{[3:\ell]})}(\bfy) - \mathsf{W}_{ y_2\circ I_2(\bfy_{[3:\ell]})}(\bfy_{[2,\ell]}).$$
Therefore, from \cite{ABC18} and Lemma \ref{correction}, 
\begin{align*}
    \cW(\bfy_{[2,\ell]}) =& \binom{r_1+3}{2}\log\left(\binom{r_1+3}{2}\right) - \binom{r_1+2}{2}\log\left(\binom{r_1+2}{2}\right) + \sum_{i=1}^{f_1}r_{i}\log\left(\sum_{i=1}^{f_1}r_{i}\right)\\
    &+ (\ell-1-R+1)(r_1+2)\log\left(r_1+2\right) - (\ell-1-R+1)(r_1+1)\log\left(r_1+1\right) \\
    &+ \sum_{i=2}^{R} \Big\{ (r_1+2)(r_i+1)\log\left((r_1+2)(r_i+1)\right) - (r_1+1)(r_i+1)\log\left((r_1+1)(r_i+1)\right) \Big\}.  
\end{align*}
In the Appendix, it is shown that $\underset{\bfy_{[2,\ell]} \in \Sigma_2^{\ell-1}}{\argmax}~\cW(\bfy_{[2,\ell]}) = \sigma^{\ell-1}$. Since $\bfg_{\ell-1} = \sigma^{\ell-1}$, it follows that
$$\underset{y_1 = y_2, \bfy_{[2:\ell]}\in \Sigma_2^{\ell-1}}{\argmax}~\mathsf{W}_{y_1\circ y_2\circ I_2(\bfy_{[3:\ell]})}(\bfy) = \sigma^{\ell}.$$
Therefore, 
$$\underset{y_1 = y_2, \bfy_{[2:\ell]}\in \Sigma_2^{\ell-1}}{\argmax}~\mathsf{W}_{y_1\circ I_2(\bfy_{[2:\ell]})}(\bfy) = \sigma^{\ell}.$$
The following result can be obtained by a manual comparison of these two cases,
$$\underset{\bfy \in \Sigma_2^{\ell}}{\argmax}~\mathsf{W}_{y_1\circ I_2(\bfy_{[2,\ell]})}(\bfy)=\sigma^\ell.$$
From \eqref{DP-EW},  ${\omega_{\bfy}(\overline{y}_1\circ\bfx)} = {\omega_{\bfy}(\bfx)},$ and by Corollary~\ref{cor:min1del} it can be deduced that
\begin{align}
    &\underset{\bfy \in \Sigma_2^{\ell}}{\argmax}~\mathsf{W}_{\overline{y_1}\circ I_1(\bfy)}(\bfy)=\sigma^{\ell}.\label{ind2}
\end{align}
Therefore,
$$\underset{\bfy \in \Sigma_2^{\ell}}{\argmax}~\mathsf{W}_{I_2(\bfy)}(\bfy)=\underset{\bfy \in \Sigma_2^{\ell}}{\argmax}\left\{\mathsf{W}_{y_1\circ I_2(\bfy_{[2,\ell]})}(\bfy) + \mathsf{W}_{\overline{y}_1\circ I_1(\bfy)}(\bfy)\right\}=\sigma^\ell.$$
\end{proof}

 \begin{figure*}[!ht]
\center
\includegraphics[width=.85\linewidth, height =20ex]{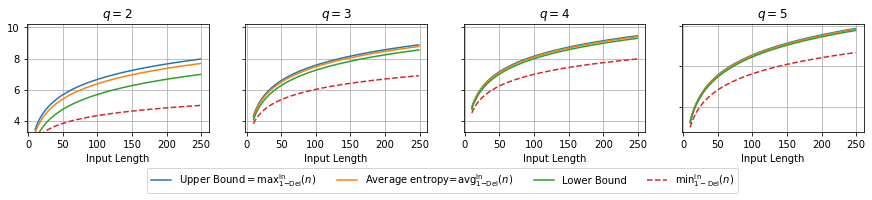}
 \caption{The minimum, maximum, average values and a lower bound on the average of the input entropy of the 1-deletion channel. } \label{fig:bounds}
\end{figure*}
\section{The Average Input Entropy}\label{sec:ave}
Let $N^m_r$ be the number of runs of length $r$ over all the sequences in $\Sigma_{q}^{m}$. It can be verified that
\begin{align*}
N^m_r = (q-1)q^{m-r-1}\left((q-1)(m-r+1)+2 \right).
\end{align*}
We now derive the expressions for the average input entropies for the $1$-deletion channel and the $1$-insertion channel. 

\begin{lemma} \label{lm:avg}
For any $n, q$, it holds that 
\begin{align*}
&\hspace{-.80ex}\mathsf{avg}^{\mathsf{In}}_{1\textrm{-}\mathsf{Del}}(n) \hspace{-0.3ex}=\hspace{-0.3ex} \log(nq) \hspace{-0.3ex}-\hspace{-0.3ex} \frac{q-1}{n}\hspace{-0.3ex}\hspace{-0.3ex}\displaystyle\hspace{-0.3ex}\sum_{r=1}^{n-1}\hspace{-0.4ex}\frac{\left(\hspace{-0.3ex}(q\hspace{-0.3ex}-\hspace{-0.3ex}1)(n\hspace{-0.3ex}-\hspace{-0.3ex}r)\hspace{-0.3ex}+\hspace{-0.3ex}2\right)\hspace{-0.5ex}(r\hspace{-0.3ex}+\hspace{-0.3ex}1)\hspace{-0.3ex}\log(r\hspace{-0.3ex}+\hspace{-0.3ex}1)}{q^{r+2}},\\
&\hspace{-.80ex}\mathsf{avg}^{\mathsf{In}}_{1\textrm{-}\mathsf{Ins}}(n) \hspace{-0.3ex}=\hspace{-0.3ex} \log(n\hspace{-0.3ex}+\hspace{-0.3ex}1) \hspace{-0.3ex}-\hspace{-0.3ex} \frac{q-1}{n+1}\displaystyle\sum_{r=1}^{n+1}\frac{\left((q\hspace{-0.3ex}-\hspace{-0.3ex}1)(n\hspace{-0.3ex}-\hspace{-0.3ex}r\hspace{-0.3ex}+\hspace{-0.3ex}2)\hspace{-0.3ex}+\hspace{-0.3ex}2 \right)r\log(r)}{q^{r+1}}.
\end{align*}
\end{lemma}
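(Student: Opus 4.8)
The plan is to compute each average directly from the closed-form per-output entropies established earlier and then reorganize the resulting double sum by run length. For the deletion channel I would start from Corollary~\ref{cor:entropy-single-deletion}, which gives $\mathsf{H}^{\mathsf{In}}_{1\textrm{-}\mathsf{Del}}(\bfy) = \log(nq) - \frac{1}{nq}\sum_{i=1}^{\rho(\bfy)}(r_i+1)\log(r_i+1)$, and average uniformly over the $q^{m}$ words $\bfy \in \Sigma_q^{m}$ with $m = n-1$. Since the leading term $\log(nq)$ is constant it survives the averaging unchanged, so the task reduces to evaluating $\frac{1}{q^m}\sum_{\bfy \in \Sigma_q^m}\sum_{i=1}^{\rho(\bfy)}(r_i+1)\log(r_i+1)$.

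The key step is a change in the order of summation. Rather than summing over words and then over the runs inside each word, I would sum over all possible run lengths $r$ and weight $(r+1)\log(r+1)$ by the total number of runs of length $r$ occurring across all of $\Sigma_q^m$. By definition this count is exactly $N^m_r$, so $\sum_{\bfy \in \Sigma_q^m}\sum_{i=1}^{\rho(\bfy)}(r_i+1)\log(r_i+1) = \sum_{r=1}^{m} N^m_r\,(r+1)\log(r+1)$. Substituting the stated value $N^m_r = (q-1)q^{m-r-1}((q-1)(m-r+1)+2)$ with $m = n-1$ (so that $m-r+1 = n-r$) and simplifying the powers of $q$ via $\frac{q^{m-r-1}}{q\,q^{m}} = q^{-r-2}$ collapses the expression into the claimed formula.

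For the insertion channel the argument is identical in structure: I would start from the $k=1$ case of Lemma~\ref{lm:entropyIn-k-ins}, namely $\mathsf{H}^{\mathsf{In}}_{1\textrm{-}\mathsf{Ins}}(\bfy) = \log(n+1) - \frac{1}{n+1}\sum_{i=1}^{\rho(\bfy)} r_i\log r_i$, average over $\Sigma_q^m$ with $m = n+1$, apply the same run-length regrouping with summand $r\log r$ in place of $(r+1)\log(r+1)$, and substitute $N^m_r$ with $m = n+1$ (so that $m-r+1 = n-r+2$). Simplifying $\frac{q^{m-r-1}}{q^{m}} = q^{-r-1}$ then yields the second formula, with summation range $r = 1,\ldots,n+1$.

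The only genuinely delicate point is the combinatorial regrouping, i.e.\ justifying that $\sum_{\bfy}\sum_{i} g(r_i) = \sum_{r} N^m_r\, g(r)$ for an arbitrary function $g$ of the run length; this is a standard double-counting identity, but it must be stated carefully since it is precisely where $N^m_r$ enters. Beyond that the proof is purely mechanical: matching the exponent of $q$ in each case and keeping track of the two different offsets $m = n-1$ and $m = n+1$ are the main bookkeeping hazards, but they present no conceptual difficulty.
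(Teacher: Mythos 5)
Your proposal is correct and follows essentially the same route as the paper's proof: average the closed-form per-output entropy from Corollary~\ref{cor:entropy-single-deletion} (resp.\ Lemma~\ref{lm:entropyIn-k-ins}), exchange the order of summation so that the inner sum is grouped by run length and weighted by $N^m_r$, then substitute the formula for $N^m_r$ with $m=n-1$ (resp.\ $m=n+1$) and simplify the powers of $q$. The bookkeeping you flag (the offsets $m=n\mp 1$ and the exponents $q^{-r-2}$ vs.\ $q^{-r-1}$) matches the paper's computation exactly.
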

 \begin{proof}
The following equalities hold
\begin{align*}
\mathsf{avg}^{\mathsf{In}}_{1\textrm{-}\mathsf{Del}}(n) 
&=\log(nq) - \frac{1}{nq}\mathbb{E}_{\Sigma_{q}^{n-1}}\left[\sum_{i=1}^{\rho(\bfy)}(r_{i}+1)\log(r_{i}+1)\right]\\
&=\log(nq) - \frac{1}{nq^n}\sum_{\bfy\in\Sigma_{q}^{n-1}}\sum_{i=1}^{\rho(\bfy)}(r_{i}+1)\log(r_{i}+1) \\
&= \log(nq) - \frac{1}{nq^{n}}\sum_{r=1}^{n-1}N_r(r+1)\log(r+1)\\
&= \log(nq) - \frac{q-1}{n}\displaystyle\sum_{r=1}^{n-1}\frac{\left((q-1)(n-r)+2 \right)(r+1)\log(r+1)}{q^{r+2}}.
\end{align*}
The value of $\mathsf{avg}^{\mathsf{In}}_{1\textrm{-}\mathsf{Ins}}(n)$ can be evaluated similarly.
\end{proof}

Clearly, $\mathsf{avg}^{\mathsf{In}}_{1\textrm{-}\mathsf{Del}}(n)\le \mathsf{max}^{\mathsf{In}}_{1\textrm{-}\mathsf{Del}}(n)$ and $\mathsf{avg}^{\mathsf{In}}_{1\textrm{-}\mathsf{Ins}}(1)\le \mathsf{max}^{\mathsf{In}}_{1\textrm{-}\mathsf{Ins}}(n)$ for any integers $n,q$. To improve the lower bounds on the average entropies which are better than the minimum values, note that if $r\geq1$, then $r\geq\log(r+1)$. Hence by Lemma~\ref{lm:avg}, it can be shown that,  
\begin{align*}
\mathsf{avg}^{\mathsf{In}}_{1\textrm{-}\mathsf{Del}}(n) \hspace{-.5ex}&\geq  \log(nq) - \frac{1}{n}\left(\frac{2n}{q-1}-\frac{n^2-n}{q^{n+1}}+\frac{2q^2-2q^{n+2}}{(q-1)^2 q^{n+1}}\right)
\hspace{-.5ex}
\end{align*}
and 
\begin{align*}
\mathsf{avg}^{\mathsf{In}}_{1\textrm{-}\mathsf{Ins}}(n) \hspace{-.5ex}&\geq  \hspace{-.5ex}\log(nq) \hspace{-.5ex} + \hspace{-.5ex} \frac{1}{n+1} \hspace{-.5ex} \left(\hspace{-.7ex} \frac{n^2}{q^{n+2}}\hspace{-.5ex}-\hspace{-.5ex}\frac{n(2q^{n+2}\hspace{-.5ex}-\hspace{-.5ex}q\hspace{-.5ex}+\hspace{-.5ex}1)}{(q\hspace{-.5ex}-\hspace{-.5ex}1)q^{n+2}}\hspace{-.5ex}+\hspace{-.5ex}\frac{2(q^n\hspace{-.5ex}-\hspace{-.5ex}1)}{(q\hspace{-.5ex}-\hspace{-.5ex}1)^2q^n}\hspace{-.5ex}\right)\hspace{-.5ex}.
\hspace{-.5ex}
\end{align*}
The bounds for $\mathsf{avg}^{\mathsf{In}}_{1\textrm{-}\mathsf{Del}}(n)$ are presented in Figure~\ref{fig:bounds}.

\section*{Appendix}
Let $\bfy \in \Sigma_2^{m}, \cR\cL(\bfy)\triangleq (r_1,r_2,\ldots,r_R)$. Then,
\begin{align*}
    \cW(\bfy) &= \binom{r_1+3}{2}\log\binom{r_1+3}{2} - \binom{r_1+2}{2}\log\binom{r_1+2}{2} + \left(1+\sum_{i=1}^{f_1}r_{i}\right)\log\left(1+\sum_{i=1}^{f_1}r_{i}\right)  - \left(\sum_{i=1}^{f_1}r_{i}\right)\log\left(\sum_{i=1}^{f_1}r_{i}\right)\\
    &+ (m-R+1)(r_1+2)\log\left(r_1+2\right) - (m-R+1)(r_1+1)\log\left(r_1+1\right)\\
    &+ \sum_{i=2}^{R} \Big\{ (r_1+2)(r_i+1)\log\left((r_1+2)(r_i+1)\right) - (r_1+1)(r_i+1)\log\left((r_1+1)(r_i+1)\right) \Big\}.
\end{align*}
We know that $(r + 1)\log(r + 1) - r\log(r)$ is an increasing function w.r.t. $r$ and it can be verified that $\sum_{i=1}^{f_1}r_{i}\log\left(\sum_{i=1}^{f_1}r_{i}\right)$ is maximized when $f_1=R$. Therefore, 
$$\underset{\bfy \in \Sigma_2^{m}}{\argmax}~\left\{\left(1+\sum_{i=1}^{f_1}r_{i}\right)\log\left(1+\sum_{i=1}^{f_1}r_{i}\right)  - \left(\sum_{i=1}^{f_1}r_{i}\right)\log\left(\sum_{i=1}^{f_1}r_{i}\right)\right\} = \sigma^m.$$
Next, let 
\begin{align*}
    \cW'(\bfy) = & (m-R+1)(r_1+2)\log\left(r_1+2\right) - (m-R+1)(r_1+1)\log\left(r_1+1\right) \\
    &+ \sum_{i=2}^{R} \Big\{ (r_1+2)(r_i+1)\log\left((r_1+2)(r_i+1)\right) - (r_1+1)(r_i+1)\log\left((r_1+1)(r_i+1)\right) \Big\},
\end{align*}
and assume to the contrary that for the sequence $\bfy'\in\Sigma_2^m$ with $\cR\cL(\bfy)\triangleq (r_1+1,r_2,\ldots, r_j-1, \ldots, r_R)$, the following holds,
$$
\cW'(\bfy)+ \binom{r_1+3}{2}\log\binom{r_1+3}{2} - \binom{r_1+2}{2}\log\binom{r_1+2}{2} > \cW'(\bfy') + \binom{r_1+4}{2}\log\binom{r_1+4}{2} - \binom{r_1+3}{2}\log\binom{r_1+3}{2}.
$$

First consider the difference, 
\begin{align*}
\cW'(\bfy') - \cW'(\bfy) =& (m-R+1)(r_1+3)\log\left(r_1+3\right) - (m-R+1)(r_1+2)\log\left(r_1+2\right) \\
&+ \sum_{i=2}^{R} \Big\{ (r_1+3)(r_i+1)\log\left((r_1+3)(r_i+1)\right) - (r_1+2)(r_i+1)\log\left((r_1+2)(r_i+1)\right) \Big\} \\ 
&- \Big\{ (r_1+3)(r_j+1)\log\left((r_1+3)(r_j+1)\right) - (r_1+2)(r_j+1)\log\left((r_1+2)(r_j+1)\right) \Big\} \\ 
&+ \Big\{ (r_1+3)(r_j)\log\left((r_1+3)(r_j)\right) - (r_1+2)(r_j)\log\left((r_1+2)(r_j)\right) \Big\} \\ 
&-\Bigg\{(m-R+1)(r_1+2)\log\left(r_1+2\right) - (m-R+1)(r_1+1)\log\left(r_1+1\right) \\
&+ \sum_{i=2}^{R} \Big\{ (r_1+2)(r_i+1)\log\left((r_1+2)(r_i+1)\right) - (r_1+1)(r_i+1)\log\left((r_1+1)(r_i+1)\right) \Big\} \Bigg\}.
\end{align*}

After some simplification it follows that
\begin{align*}
\cW'(\bfy') - \cW'(\bfy) = &  (m-R+1)\bigg\{(r_1+3)\log\left(r_1+3\right) - 2(r_1+2)\log\left(r_1+2\right) + (r_1+1)\log\left(r_1+1\right)\bigg\} \\
&+ \sum_{i=2}^{R} (r_i+1)\Big\{ (r_1+3)\log\left(r_1+3\right) - 2(r_1+2)\log\left(r_1+2\right) + (r_1+1)\log\left(r_1+1\right) \Big\} \\ 
&+ (r_1+2)\log(r_1+2) - (r_1+3)\log(r_1+3) + (r_j)\log(r_j) - (r_j+1)\log(r_j+1) \\
=& (2m-r_1)\bigg\{(r_1+3)\log\left(r_1+3\right) - 2(r_1+2)\log\left(r_1+2\right) + (r_1+1)\log\left(r_1+1\right)\bigg\} \\
&+ (r_1+2)\log(r_1+2) - (r_1+3)\log(r_1+3) + (r_j)\log(r_j) - (r_j+1)\log(r_j+1). \\
\intertext{We know that $(r + 1)\log(r + 1) - r\log(r)$ is an increasing function w.r.t. $r$. Therefore,}
\cW'(\bfy') - \cW'(\bfy)&\geq(2m-r_1)\bigg\{(r_1+3)\log\left(r_1+3\right) - 2(r_1+2)\log\left(r_1+2\right) + (r_1+1)\log\left(r_1+1\right)\bigg\} \\
&+ (r_1+2)\log(r_1+2) - (r_1+3)\log(r_1+3) + (m)\log(m) - (m+1)\log(m+1).
\end{align*}
Consider now the difference,
\begin{align*}
&\cW'(\bfy') - \cW'(\bfy)+ \binom{r_1+4}{2}\log\binom{r_1+4}{2} - 2\binom{r_1+3}{2}\log\binom{r_1+3}{2} +\binom{r_1+2}{2}\log\binom{r_1+2}{2} \\
\geq&  (2m-r_1)\bigg\{(r_1+3)\log\left(r_1+3\right) - 2(r_1+2)\log\left(r_1+2\right) + (r_1+1)\log\left(r_1+1\right)\bigg\} \\
&  + (r_1+2)\log(r_1+2) - (r_1+3)\log(r_1+3) + (m)\log(m) - (m+1)\log(m+1)\\
&  + \binom{r_1+4}{2}\log\binom{r_1+4}{2} - 2\binom{r_1+3}{2}\log\binom{r_1+3}{2} +\binom{r_1+2}{2}\log\binom{r_1+2}{2}.
\end{align*}
It can be verified that the above lower bound is a decreasing function w.r.t. $r_1$ and that at $r_1 = m$ it is greater than $0$. Therefore this a contradiction to our assumption. Hence, 
$$
\underset{\bfy \in \Sigma_2^{m}}{\argmax}~\left\{
\cW'(\bfy)+ \binom{r_1+3}{2}\log\binom{r_1+3}{2} - \binom{r_1+2}{2}\log\binom{r_1+2}{2}\right\} = \sigma^m,
$$
and thus $\underset{\bfy \in \Sigma_2^{m}}{\argmax}~\cW(\bfy) = \sigma^m$. 
\end{document}